\documentclass{article}

    \PassOptionsToPackage{numbers, compress}{natbib}

\usepackage[preprint]{neurips_2026}

\usepackage[utf8]{inputenc} 
\usepackage[T1]{fontenc}    
\usepackage{hyperref}       
\usepackage{url}            
\usepackage{booktabs}       
\usepackage{amsfonts}       
\usepackage{nicefrac}       
\usepackage{microtype}      
\usepackage{xcolor}         
\usepackage{amsmath, amssymb, amsthm}
\usepackage{mathtools}
\usepackage{bbm}
\usepackage{pgf}
\usepackage{wrapfig}
\usepackage{array}
\usepackage{svg}

\theoremstyle{plain}
\newtheorem{theorem}{Theorem}[section]
\newtheorem{proposition}[theorem]{Proposition}
\newtheorem{lemma}[theorem]{Lemma}
\newtheorem{corollary}[theorem]{Corollary}
\newtheorem{definition}[theorem]{Definition}

\theoremstyle{remark}
\newtheorem{remark}[theorem]{Remark}

\title{Optimal Privacy-Utility Trade-Offs in LDP: Functional and Geometric Perspectives}

\author{%
    Seung-Hyun~Nam
    \\
    Information \& Electronics Research Institute\\
    KAIST\\
    \texttt{shnam@kaist.ac.kr} \\
    \And
    Hyun-Young~Park \\
    School of Electrical Engineering\\
    KAIST \\
    \texttt{phy811@kaist.ac.kr} \\
    \And
    Si-Hyeon Lee \\
    School of Electrical Engineering\\
    KAIST \\
    \texttt{sihyeon@kaist.ac.kr} \\
}

\begin{document}

\maketitle

\begin{abstract}
Local differential privacy (LDP) has emerged as a gold-standard framework for privacy-preserving data analysis. However, characterizing the optimal privacy–utility trade-off (PUT) and the corresponding optimal LDP channels remains largely fragmented, relying on problem-specific, case-by-case analyses.
In this work, we develop a unified theoretical framework that systematically characterizes the optimal PUT and optimal LDP channels for general privacy-preserving statistical decision-making problems.
We first identify key functional properties of Bayesian and minimax risks as functions of the LDP channel, including the data processing inequality (DPI), direct-sum quasi-convexity (or additivity), concavity, and symmetry invariance.
Leveraging these properties, we reduce the optimization domain required to compute the optimal PUT.
Additionally, building on convex geometric insights, we establish a one-to-one correspondence between maximal LDP channels under the Blackwell order and a finite-dimensional polytope, yielding an exact geometric characterization.
This result renders the optimal PUT computationally tractable via vertex enumeration or linear programming.
Furthermore, when the underlying problem exhibits symmetries characterized by a transitive group action, we derive an exact analytic expression for the optimal PUT, leading to closed-form solutions without numerical optimization.
Our framework applies broadly beyond risk minimization, encompassing the maximization of information-theoretic measures such as mutual information, $f$-divergences, and Fisher information over LDP channels.
We demonstrate the efficacy of our theoretical framework by recovering or strengthening several known results, and deriving exact analytic expressions for the optimal PUTs in specific tasks that were previously unaddressed.
\end{abstract}

\section{Introduction}\label{sec:intro}
While data-driven statistical inference has enabled a wide range of applications, the collection and analysis of user-level data continue to raise critical privacy concerns \cite{narayananRobustDeanonymizationLarge2008a, fredriksonModelInversionAttacks2015, dworkExposedSurveyAttacks2017, geipingInvertingGradientsHow2020}.
In settings where data providers do not fully trust a central data collector, local differential privacy (LDP) has emerged as a gold-standard mathematical framework for privacy-preserving data analysis \cite{kasiviswanathanWhatCanWe2011a, dworkAlgorithmicFoundationsDifferential2013, duchiLocalPrivacyStatistical2013a}.
Under LDP, each individual randomizes their data via a stochastic channel--commonly referred to as an LDP mechanism or LDP channel--prior to transmission. This ensures rigorous privacy guarantees without relying on a trusted third party or assumptions about an adversary's computational power \cite[Thm.~14]{issaOperationalApproachInformation2020}.

However, this privacy protection inevitably comes at a cost.
By the data processing inequality (DPI), the randomized data necessarily contains less information than the original data, which degrades the utility of downstream inference tasks.
Consequently, a central problem in privacy-preserving data analysis is to characterize the optimal privacy–utility trade-off (PUT) and to design LDP channels that achieve this optimum. Despite extensive research, existing works largely address PUT characterization and optimal channel design on a case-by-case basis, often relying on problem-specific and ad hoc techniques.
As a result, the current literature remains fragmented and lacks a unified theoretical framework for systematically characterizing optimal PUTs and the corresponding optimal LDP channels.

\subsection{Related Work}\label{sec:related_work}
The framework of LDP originally emerged as a local variant of  differential privacy (DP) \cite{dworkDifferentialPrivacy2006}, where DP assumes a trusted central data collector, whereas LDP does not.
For both DP- and LDP-based privacy-preserving tasks, numerous prior works have characterized the optimal PUTs and the corresponding optimal (L)DP channels \cite{ghoshUniversallyUtilitymaximizingPrivacy2009, duchiLocalPrivacyStatistical2013a, gengOptimalNoiseAddingMechanism2016a, kairouzExtremalMechanismsLocal2016b, yeOptimalSchemesDiscrete2018, acharyaHadamardResponseEstimating2019, wangLocalDifferentialPrivate2019, chenBreakingCommunicationPrivacyAccuracyTrilemma2020, barnesFisherInformationLocal2020, asiOptimalAlgorithmsMean2022a, parkExactlyOptimalCommunicationEfficient2024, namAchievingExactlyOptimal2024, namOptimalPrivateDiscrete2024b, parkExactlyMinimaxOptimalLocally2024, kalininEfficientEstimationGaussian2025, yoonFundamentalLimitDiscrete2025a, gentleNecessityBlockDesigns2025a, namQuantumAdvantagePrivate2025, amorinoRoleSymmetryStaircase2026, melbourneOptimalityStaircaseMechanisms2026}.
While the literature spans a wide range of specialized settings and generalizations, this work focuses on non-interactive $\epsilon$-LDP channels between finite input and output alphabets.
Moreover, we adopt the notion of exact optimality that does not rely on approximation terms, thereby providing a strictly stronger guarantee than the commonly used notion of order optimality.

Within this specific setting (non-interactive $\epsilon$-LDP channel between finite input and output alphabets), the extremal LDP mechanisms proposed in \cite{kairouzExtremalMechanismsLocal2016b} achieve the optimal privacy--utility trade-off (PUT) for several tasks, although existing proofs often rely on ad hoc technical arguments.
In particular, \cite{kairouzExtremalMechanismsLocal2016b} showed that \emph{extremal mechanisms} are optimal for a class of utility functions satisfying certain properties, including all $f$-divergences.
Moreover, the \emph{subset selection} (SS) mechanism, a notable special case of an extremal mechanism, has been shown to be optimal in several settings.
For instance, it maximizes mutual information under a uniform input distribution \cite{wangLocalDifferentialPrivate2019}, achieves asymptotic minimax optimality in discrete distribution estimation \cite{yeOptimalSchemesDiscrete2018}, and is optimal for asymptotic multiple and binary composite hypothesis testing under certain settings \cite{namQuantumAdvantagePrivate2025}.

Beyond domain-specific approaches, several works have sought to understand the intrinsic geometry of the set of LDP channels and the functional properties of utility metrics.
From a geometric perspective, the extremal mechanisms in \cite{kairouzExtremalMechanismsLocal2016b} were derived by considering the set of fixed-dimensional LDP channels as a polytope and exploiting the properties of its vertices.
Building on this viewpoint, \cite{holohanExtremePointsLocal2017} analyzed the full set of vertices, revealing the existence of nontrivial vertices that do not correspond to extremal mechanisms.
This suggests that a more refined geometric framework is needed to rigorously establish the optimality of extremal mechanisms.

Shifting focus to the functional properties of utility, \cite{kairouzExtremalMechanismsLocal2016b} considered a class of utility functions that can be decomposed into sublinear components, which includes all $f$-divergences.
However, this class remains limited; for example, the minimax risk in privacy-preserving decision-making considered in this work does not fall into this class.\footnote{Minimizing risk is equivalent to maximizing utility.} Another line of work relies on \emph{contraction coefficients}, which are closely linked to the strong DPI, to analyze the optimal PUT \cite{duchiLocalPrivacyStatistical2013a, asoodehLocalDifferentialPrivacy2021}.
However, this approach often fails to yield an exact characterization and still requires problem-specific techniques to derive tight contraction coefficients for different utility measures. 
Among more recent approaches, \cite{amorinoRoleSymmetryStaircase2026} studied the role of symmetry in maximizing Fisher information over LDP channels, while \cite{melbourneOptimalityStaircaseMechanisms2026} established the optimality of the \emph{staircase mechanism} \cite{gengOptimalNoiseAddingMechanism2016a} in minimizing expected cost among additive global DP channels, leveraging the monotonicity of the expected cost under a certain symmetric operation on a DP channel.

\subsection{Our Contributions}
To address these limitations, we develop a unified theoretical framework for characterizing the optimal PUT and the corresponding optimal LDP channels.
Our main contributions are summarized as follows.

\begin{enumerate}
    \item \textbf{General Framework for Computing Optimal PUT:}
    We introduce a general framework for privacy-preserving statistical decision-making, illustrated in Fig.~\ref{fig:priv_dec}, which encompasses both hypothesis testing and parametric estimation as special cases.
    In Proposition~\ref{prop:BM_structure}, we show that the optimal Bayesian and minimax risks, viewed as functions $R^*(Q)$ of an LDP channel $Q$, satisfy key functional properties: the data processing inequality (DPI), direct-sum quasi-convexity (or additivity), concavity, and group invariance.
    Leveraging only these properties, Theorem~\ref{thm:opt_PUT_red} reduces the optimization domain required to compute the optimal PUT, $\inf_{Q} R^*(Q)$.
    As a result, the framework extends beyond risk minimization to also encompass the maximization of information measures, such as mutual information, $f$-divergences, and Fisher information over LDP channels \cite{kairouzExtremalMechanismsLocal2016b, wangLocalDifferentialPrivate2019, barnesFisherInformationLocal2020}.

    \item \textbf{Characterization of Maximal LDP Channels:}
    Although Theorem~\ref{thm:opt_PUT_red} reduces the optimization domain, the problem remains computationally intractable since it still includes LDP channels with arbitrarily large (finite) output alphabets.
    We resolve this issue by restricting the domain, without loss of optimality, to a finite-dimensional polytope, as illustrated in Fig.~\ref{fig:dim_red}.
    Specifically, Theorem~\ref{thm:extremal_LDP_is_max} establishes a one-to-one correspondence between maximal LDP channels--under the Blackwell order \cite{blackwellComparisonExperiments1951}--and this polytope, yielding an exact geometric characterization.
    This result strengthens prior work \cite{kairouzExtremalMechanismsLocal2016b, namQuantumAdvantagePrivate2025}, which established only the sufficiency of extremal mechanisms, by eliminating non-maximal elements among them and precisely identifying the set of maximal LDP channels. 
    As a consequence, the optimal PUT can be computed via vertex enumeration or linear programming over this polytope when $R^*(Q)$ is concave or direct-sum additive in $Q$, respectively (cf. Remark~\ref{rmk:LP_DPI}, \cite[Thm.~4]{kairouzExtremalMechanismsLocal2016b}).
    Furthermore, Theorem~\ref{thm:G_inv_poly} establishes a one-to-one correspondence between group-invariant maximal LDP channels and a lower-dimensional polytope, further reducing the optimization domain when $R^*$ is invariant under group symmetries.
    
    \item \textbf{Analytic Characterization of Optimal PUT:}
    When $R^*$ is concave and invariant under the induced action of a group acting transitively on the input alphabet, Corollary~\ref{cor:PUT_transitive} provides an exact analytic expression for the optimal PUT, reducing the problem to a finite minimization and eliminating the need for numerical optimization.
    In particular, when this invariance extends to all permutations of the input alphabet, the optimal LDP channel reduces to the SS mechanism \cite{yeOptimalSchemesDiscrete2018} (Remark~\ref{rmk:SS}).
    We further apply the framework to classical examples, deriving exact analytic expressions for the optimal PUT in certain one-shot hypothesis testing and parametric estimation problems that were previously unresolved (Appendix~\ref{sec:app}).
\end{enumerate}

\begin{figure}[h]
    \centering
    \includegraphics[width=\linewidth]{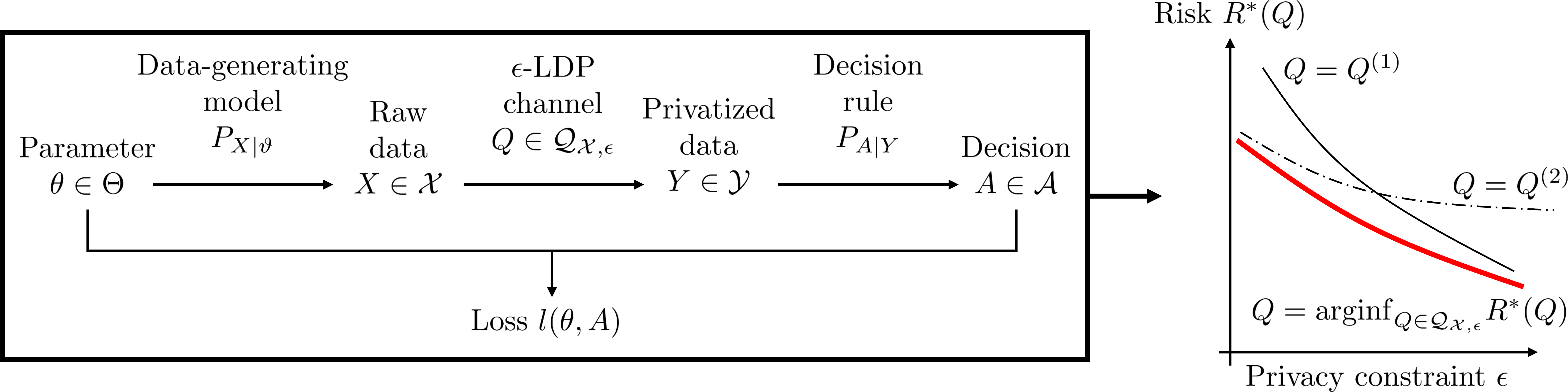}
    \caption{Privacy-preserving statistical decision-making scenario and the optimal PUT (details are in Section~\ref{sec:stat_dec}).}
    \label{fig:priv_dec}
\end{figure}

\begin{figure}[h]
    \centering
    \includegraphics[width=0.9\linewidth]{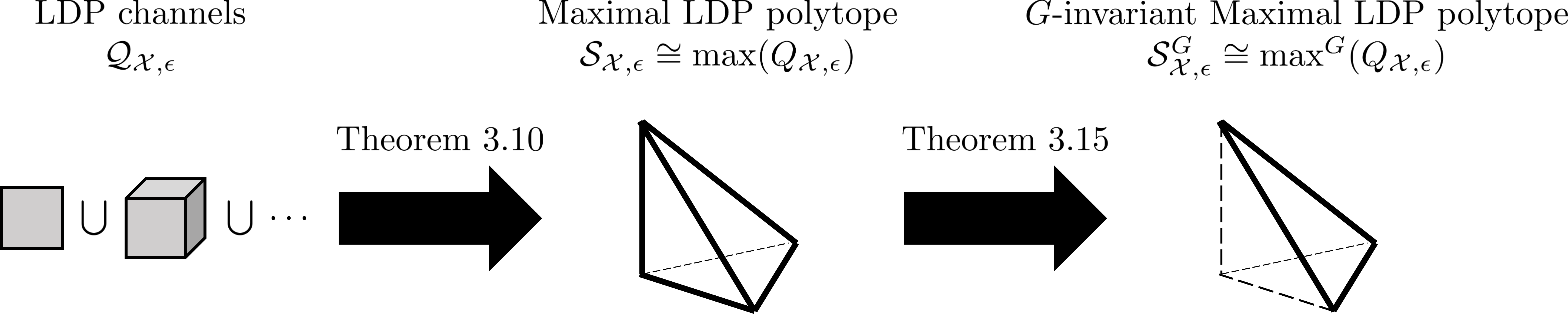}
    \caption{A conceptual visualization of the dimensionality reduction of the optimization domain for computing the optimal PUT (details are in Sections~\ref{sec:max_LDP} and~\ref{sec:G_inv}).}
    \label{fig:dim_red}
\end{figure}

\section{Preliminaries}\label{sec:pre}

\subsection{Notations}\label{sec:notations}
For a positive integer $m \in \mathbb{N}$, we define $[m] := \{1, 2, \dots, m\}$.
For a finite set $\mathcal{X}$ and a vector $v \in \mathbb{R}^\mathcal{X}$, the support of $v$ is defined as the set of indices corresponding to non-zero components, $\mathrm{supp}(v) := \{x \in \mathcal{X} \mid v_x \neq 0\}$.
The all-zero and all-one vectors in $\mathbb{R}^\mathcal{X}$ are denoted by $\mathbf{0}_{\mathcal{X}}$ and $\mathbf{1}_{\mathcal{X}}$, respectively.
For vectors $u, v \in \mathbb{R}^\mathcal{X}$, the component-wise partial order is denoted by $u \geq v$, meaning $u_x \geq v_x$ for all $x \in \mathcal{X}$. In particular, $u \geq 0$ means $u \geq \mathbf{0}_{\mathcal{X}}$.

Let $\mathcal{X}$ and $\mathcal{Y}$ be finite sets (alphabets).
The set of real-valued matrices indexed by $\mathcal{Y} \times \mathcal{X}$ is denoted by $\mathbb{R}^{\mathcal{Y}\times \mathcal{X}}$.
We denote the $x$-th column, $y$-th row, and $(y,x)$-th component of $A \in \mathbb{R}^{\mathcal{Y}\times \mathcal{X}}$ by $A_{*,x}$, $A_{y,*}$, and $A_{y,x}$, respectively.
We define the class of all matrices with columns indexed by $\mathcal{X}$ and rows indexed by an arbitrary finite alphabet as $\mathbb{R}^{* \times \mathcal{X}}$.\footnote{We use the term ``class'' instead of ``set'' because the aggregate of all finite sets forms a proper class.}
The all-zero and all-one matrices in $\mathbb{R}^{\mathcal{Y}\times \mathcal{X}}$ are denoted by $\mathbf{0}_{\mathcal{Y},\mathcal{X}}$ and $\mathbf{1}_{\mathcal{Y},\mathcal{X}}$, respectively.
For two matrices $A_1 \in \mathbb{R}^{\mathcal{Y}_1 \times \mathcal{X}}$ and $A_2 \in \mathbb{R}^{\mathcal{Y}_2 \times \mathcal{X}}$, we identify their direct-sum $A_1 \oplus A_2 \in \mathbb{R}^{(\mathcal{Y}_1 \sqcup \mathcal{Y}_2) \times \mathcal{X}}$ by stacking them vertically, i.e., 
\begin{equation}
    A_1 \oplus A_2 := \begin{bmatrix} A_1 \\ A_2 \end{bmatrix}.
\end{equation}
Consequently, any matrix $A \in \mathbb{R}^{\mathcal{Y} \times \mathcal{X}}$ constructed from row vectors $\{a_y^\top\}_{y \in \mathcal{Y}}$ can be compactly written as the direct-sum $A = \bigoplus_{y \in \mathcal{Y}} a_y^\top$.

Let $\mathcal{X}$ and $\mathcal{Y}$ be measurable spaces.
The set of all probability measures on $\mathcal{X}$ is denoted by $\mathcal{P}(\mathcal{X})$.
A channel from $\mathcal{X}$ to $\mathcal{Y}$ is a conditional probability (Markov kernel) which is typically denoted by $P_{Y|X}$.
We denote the set of all channels from $\mathcal{X}$ to $\mathcal{Y}$ by $\mathcal{P}(\mathcal{X} \to \mathcal{Y})$.
When $\mathcal{X}$ is a finite set, a probability measure $P \in \mathcal{P}(\mathcal{X})$ is identified with a probability mass vector in $\mathbb{R}_+^{\mathcal{X}}$.
Similarly, when both $\mathcal{X}$ and $\mathcal{Y}$ are finite, a channel $P_{Y|X} \in \mathcal{P}(\mathcal{X} \to \mathcal{Y})$ is identified with a column-stochastic matrix $Q \in \mathbb{R}_+^{\mathcal{Y} \times \mathcal{X}}$, where the matrix element $Q_{y,x} := P_{Y|X}(y|x)$.
We define the class of all channels from a finite $\mathcal{X}$ to an arbitrary finite alphabet as $\mathcal{P}(\mathcal{X} \to *)$.

Let $G$ be a group with identity element $e$.
A (left) group action of $G$ on a set $\mathcal{X}$ is formally defined as a map $\sigma_{(\cdot)}: G \rightarrow ( \mathcal{X} \to \mathcal{X})$ that satisfies $\sigma_e(x) = x$, $\sigma_g (\sigma_h (x)) = \sigma_{gh}(x)$ for all $g, h \in G$ and $x \in \mathcal{X}$.
When a specific group action is fixed and can be understood from the context without ambiguity, we omit the map $\sigma$ and simply write $gx := \sigma_g (x)$ for brevity.
For any element $x \in \mathcal{X}$, its orbit under the action of $G$ is defined as $Gx := \{gx \mid g \in G\}$.
Since the orbits form a partition of $\mathcal{X}$, we denote the set of all such orbits by $\mathcal{X}/G$.

\subsection{Privacy-Preserving Statistical Decision-Making}\label{sec:stat_dec}
A statistical decision-making scenario is described by a tuple $(\Theta, \mathcal{X}, P_{X|\vartheta}, \mathcal{A}, l)$, where $\Theta$ denotes a parameter space, $\mathcal{X}$ is a raw data space, $P_{X|\vartheta} \in \mathcal{P}(\Theta \rightarrow \mathcal{X})$ is a data-generating model, $\mathcal{A}$ is a set of all possible decisions, and $l:\Theta \times \mathcal{A} \rightarrow \mathbb{R}$ is a loss function \cite{fergusonMathematicalStatisticsDecision1967, bergerStatisticalDecisionTheory1985}.
A privacy-preserving statistical decision-making problem is an instance of a statistical decision-making where raw data $X \in \mathcal{X}$ should be randomized into $Y \in \mathcal{Y}$ through an $\epsilon$-LDP channel $Q \in \mathcal{P}(\mathcal{X}\rightarrow \mathcal{Y})$, and the decision maker only has access to the randomized data $Y$.
In this work, we focus on the case where both $\mathcal{X}$ and $\mathcal{Y}$ are finite, and $\mathcal{Y}$ can be chosen to be an arbitrary finite set.
The formal definition of $\epsilon$-LDP is given as follows.
\begin{definition}[Local differential privacy (LDP)]
    For $\epsilon > 0$, a channel $Q \in \mathcal{P}(\mathcal{X} \rightarrow *)$ is an $\epsilon$-LDP channel if
    \begin{equation}
        \forall x,x' \in \mathcal{X}, \; e^\epsilon Q_{*,x} - Q_{*,x'} \geq 0.
    \end{equation}
    The class of all $\epsilon$-LDP channels from $\mathcal{X}$ is denoted by $\mathcal{Q}_{\mathcal{X},\epsilon}$.
\end{definition}

Based on observing $Y$, the decision maker makes its decision $A \in \mathcal{A}$ through a decision rule $P_{A|Y} \in \mathcal{P}(\mathcal{Y} \rightarrow \mathcal{A})$.
The quality of the decision is measured by a loss function $l:\Theta \times \mathcal{A} \rightarrow \mathbb{R}$.
Since the system is stochastic, it is natural to consider expected loss, called the risk, defined by
\begin{equation}
    R(\theta,Q,P_{A|Y}) := \mathbb{E}\left[ l (\theta, A) \right] = \sum\limits_{x \in \mathcal{X}, y \in \mathcal{Y}}  P_{X|\vartheta}(x|\theta) Q_{y,x} \int_\mathcal{A} l(\theta,a)  P_{A|Y}(da|y).
\end{equation}
In practice, the true risk cannot be directly evaluated because the underlying parameter $\theta$ is unknown.
To address this issue, there are two standard frameworks: the Bayesian and minimax frameworks.
In the Bayesian framework, we assume the model parameter is a random variable $\vartheta \in \Theta$ following a prior distribution $\lambda \in \mathcal{P}(\Theta)$. 
The Bayesian risk $R_{\mathrm{B}}$ is defined by
\begin{equation}
    R_{\mathrm{B}}(Q,P_{A|Y};\lambda) := \int_{\Theta} R(\theta,Q,P_{A|Y}) \lambda(d\theta).
\end{equation}
In the minimax framework, we consider the worst-case risk $R_{\mathrm{M}}$, defined by
\begin{equation}
    R_{\mathrm{M}}(Q,P_{A|Y}) := \sup\limits_{\theta \in \Theta} R(\theta,Q,P_{A|Y}).
\end{equation}
For a fixed channel $Q$, we define the optimal Bayesian risk and the minimax risk by minimizing the Bayesian risk and the worst-case risk over all possible decision rules $P_{A|Y}$, respectively:
\begin{equation}
    R_{\mathrm{B}}^*(Q;\lambda) := \inf\limits_{P_{A|Y} \in \mathcal{P}(\mathcal{Y} \rightarrow \mathcal{A})} R_{\mathrm{B}}(Q,P_{A|Y};\lambda), \quad R_{\mathrm{M}}^*(Q) := \inf\limits_{P_{A|Y} \in \mathcal{P}(\mathcal{Y} \rightarrow \mathcal{A}) } R_{\mathrm{M}}(Q,P_{A|Y}).
\end{equation}
For the Bayesian (resp. minimax) framework, the optimal privacy-utility trade-off (PUT) is defined as the minimum optimal Bayesian (resp. minimax) risk over all $\epsilon$-LDP channels:
\begin{equation}
    \mathrm{PUT}_{\mathrm{B}}(\mathcal{X},\epsilon;\lambda) := \inf\limits_{Q \in \mathcal{Q}_{\mathcal{X},\epsilon}} R_{\mathrm{B}}^*(Q;\lambda), \quad \mathrm{PUT}_{\mathrm{M}}(\mathcal{X},\epsilon) := \inf\limits_{Q \in \mathcal{Q}_{\mathcal{X},\epsilon}} R^*_{\mathrm{M}}(Q).
\end{equation}
We call an $\epsilon$-LDP channel $Q$ that achieves the optimal PUT an optimal $\epsilon$-LDP channel.

In addition, we can consider a statistical decision-making scenario that is invariant under some symmetry, which can be defined formally as follows.

\begin{definition}\label{def:g_invariant_problem}
    {
    Let $G$ be a group.
    An invariant action of $G$ on a statistical decision-making scenario $(\Theta, \mathcal{X}, P_{X|\vartheta}, \mathcal{A}, l)$ consists of $G$-actions on $\Theta, \mathcal{X}$, and $\mathcal{A}$, satisfying the following two conditions:
    \begin{enumerate}
        \item \textbf{Model Invariance}: $\forall \theta \in \Theta,\; g \in G,\; x \in \mathcal{X}, \quad P_{X|\vartheta}(gx|g\theta) = P_{X|\vartheta}(x|\theta)$.
        \item \textbf{Loss Invariance}: $\forall \theta \in \Theta,\; g \in G,\; a \in \mathcal{A},\quad l(g \theta, g a) = l(\theta, a)$.
    \end{enumerate}
    When these conditions hold, the scenario is said to be $G$-invariant with respect to such actions.
    Additionally, a prior distribution $\lambda \in \mathcal{P}(\Theta)$ is said to be $G$-invariant if 
    \begin{equation}
        \forall g \in G, \; B \subseteq \Theta, \quad \lambda (g B ) = \lambda(B).
    \end{equation}
    }

\end{definition}

For a given $G$-action on $\mathcal{X}$, $G$-invariance of a channel is defined as follows.

\begin{definition}\label{def:invariant_ch}
    Let $G$ be a finite group, and its action on a finite input alphabet $\mathcal{X}$ is given.
    For any given finite output alphabet $\mathcal{Y}$ equipped with a $G$-action $\sigma_{(\cdot)}:G \rightarrow (\mathcal{Y} \rightarrow \mathcal{Y})$, we define the induced action on $\mathcal{P}(\mathcal{X}\rightarrow \mathcal{Y})$ by
    \begin{equation}
        (g\circ_\sigma Q)_{y,x} := Q_{\sigma_{g^{-1}}(y),g^{-1}x}.
    \end{equation}
    A channel $Q \in \mathcal{P}(\mathcal{X}\to \mathcal{Y})$ is said to be $G$-invariant with respect to $\sigma$ if $g \circ_\sigma Q = Q$ for all $g \in G$.
    We define $\mathcal{Q}_{\mathcal{X},\epsilon}^G$ as the class of all $\epsilon$-LDP channels from $\mathcal{X}$ to an arbitrary output alphabet $\mathcal{Y}$ that exhibit $G$-invariance for at least one $G$-action $\sigma$.
\end{definition}

\section{Main Results}\label{sec:main}

Our primary contribution is a unified framework for calculating the optimal PUT and characterizing the optimal LDP channels.
We achieve this by leveraging the functional properties of optimal risks alongside the intrinsic geometry of the set of LDP channels.
The following subsections detail these technical developments.

\subsection{Functional Properties of Optimal Risks}\label{sec:struct}

We first introduce the Blackwell order \cite{blackwellComparisonExperiments1951} which captures the DPI as an order between two channels: post-processing a channel's output cannot increase the amount of information it provides.

\begin{definition}[Blackwell order]
    Let $\mathcal{X}$, $\mathcal{Y}^{(1)}$, and $\mathcal{Y}^{(2)}$ be finite sets.
    For channels $Q^{(1)} \in \mathcal{P}(\mathcal{X}\rightarrow \mathcal{Y}^{(1)})$ and $Q^{(2)} \in \mathcal{P}(\mathcal{X}\rightarrow \mathcal{Y}^{(2)})$, we say $Q^{(1)}$ dominates $Q^{(2)}$, and write $Q^{(2)} \precsim Q^{(1)}$, if there exists a channel $W \in \mathcal{P}(\mathcal{Y}^{(1)} \rightarrow \mathcal{Y}^{(2)})$ such that
    \begin{equation}
        Q^{(2)} = W Q^{(1)}.
    \end{equation}
\end{definition}

The Blackwell order $\precsim$ is a preorder on $\mathcal{P}(\mathcal{X} \to *)$, and this naturally induces an equivalence relation $\sim$, defined by $Q^{(1)} \sim Q^{(2)}$ if and only if $Q^{(1)} \precsim Q^{(2)}$ and $Q^{(2)} \precsim Q^{(1)}$.
For a subclass $\mathcal{Q} \subset \mathcal{P}(\mathcal{X} \to *)$, we denote $\mathcal{Q}/\sim$ by the quotient subclass with respect to this equivalence relation.
Also, with respect to Blackwell order, \emph{most informative channels} can be formally described by maximal elements.

\begin{definition}
    Let $(\mathcal{Q}, \precsim)$ be a preordered class.
    An element $Q \in \mathcal{Q}$ is a maximal element of $\mathcal{Q}$ if
    \begin{equation}
        \forall B \in \mathcal{Q},\; Q \precsim B \Rightarrow B \precsim Q.
    \end{equation}
    The subclass of all maximal elements of $\mathcal{Q}$ is denoted by $\max(\mathcal{Q})$.
\end{definition}

Given the notion of maximality, a natural hypothesis is that the optimal LDP channel $Q$ resides within $\max(\mathcal{Q}_{\mathcal{X},\epsilon})$.
Also, if the underlying statistical decision-making scenario exhibits $G$-invariance, one would expect the optimal LDP channel to preserve this symmetry. Both of these intuitive hypotheses hold true, emerging as consequences of the functional properties of the optimal risks $R^*_{\mathrm{B}}$ and $R^*_{\mathrm{M}}$.

\begin{definition}\label{def:struct}
    A map $R^*: \mathcal{P}(\mathcal{X} \rightarrow *)\rightarrow \mathbb{R}$ is said to satisfy:
    \begin{enumerate}
        \item \textbf{Data-Processing Inequality (DPI)}, if it preserves the Blackwell order in the reverse order:
        \begin{equation}\label{eq:DPI}
            \forall Q, \tilde{Q} \in \mathcal{P}(\mathcal{X} \to *), \quad Q \precsim \tilde{Q} \Rightarrow R^*(Q) \geq R^*(\tilde{Q}).
        \end{equation}

        \item \textbf{Direct-Sum Quasi-Convexity (DS-QCVX)}, if for any collection of channels $Q^{(j)} \in \mathcal{P}(\mathcal{X} \rightarrow *)$ and probabilities $p_j \geq 0$ summing to $1$,
        \begin{equation}
            R^*\left( \bigoplus_j p_j Q^{(j)} \right) \leq \max_j R^*(Q^{(j)}).
        \end{equation}

        \item \textbf{Direct-Sum Additivity (DSA)}, if for any collection of channels $Q^{(j)} \in \mathcal{P}(\mathcal{X} \rightarrow *)$ and probabilities $p_j \geq 0$ summing to $1$,
        \begin{equation}
            R^*\left( \bigoplus_j p_j Q^{(j)} \right) = \sum_j p_j R^*(Q^{(j)}).
        \end{equation}

        \item \textbf{Concavity (CCV)}, if for any finite set $\mathcal{Y}$, channels $Q^{(j)} \in \mathcal{P}(\mathcal{X} \rightarrow \mathcal{Y})$, and probabilities $p_j \geq 0$ summing to $1$,
        \begin{equation}
            R^*\left( \sum_j p_j Q^{(j)} \right) \geq \sum_j p_j R^*(Q^{(j)}).
        \end{equation}
    \end{enumerate}
\end{definition}

\begin{remark}
    DPI and DSA imply CCV, and DSA trivially implies DS-QCVX.
    However, DPI and DS-QCVX do not imply CCV in general.
\end{remark}

\begin{definition}\label{def:invariant_utility}
    Let $G$ be a finite group whose action on a finite set $\mathcal{X}$ is given.
    A map $R^*: \mathcal{P}(\mathcal{X} \rightarrow *) \rightarrow \mathbb{R}$ is $G$-invariant if for any finite output alphabet $\mathcal{Y}$ and any $G$-action $\sigma$ on $\mathcal{Y}$,
    \begin{equation}
        \forall g \in G,\; Q \in \mathcal{P}(\mathcal{X} \rightarrow \mathcal{Y}), \quad R^*(g \circ_\sigma Q) = R^*(Q).
    \end{equation}
\end{definition}

\begin{proposition}\label{prop:BM_structure}
    For a privacy-preserving statistical decision-making scenario introduced in Section~\ref{sec:stat_dec}, the following holds:
    \begin{enumerate}
        \item For any given prior distribution $\lambda \in \mathcal{P}(\Theta)$, the optimal Bayesian risk $R^*_{\mathrm{B}}$ satisfies the DPI and DSA.

        \item The minimax risk $R^*_{\mathrm{M}}$ satisfies the DPI and DS-QCVX.

        \item If a statistical decision-making scenario is $G$-invariant, then $R^*_{\mathrm{B}}$ with $G$-invariant prior $\lambda$ and $R^*_{\mathrm{M}}$ are $G$-invariant.
    \end{enumerate}
\end{proposition}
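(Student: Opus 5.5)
The three parts reduce to elementary operations on decision rules, so the plan is to treat them one at a time, each with a direct construction.

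\textbf{DPI.} Suppose $Q \precsim \tilde Q$, so $Q = W\tilde Q$ for some channel $W$. Any decision rule $P_{A|Y}$ used with $Q$ can be simulated from $\tilde Q$ by the composed rule $P_{A|Y}\circ W$: first apply $W$, then $P_{A|Y}$. This composed rule is still a Markov kernel into $\mathcal{A}$. It produces the same joint law of $(\theta, A)$, hence the same risk $R(\theta,\cdot,\cdot)$ for every $\theta$. Consequently both the Bayesian risk and the worst-case risk are unchanged. Taking the infimum over $P_{A|Y}$ gives $R^*(\tilde Q)\le R^*(Q)$ in both frameworks.

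\textbf{Direct-sum properties.} Let $Q=\bigoplus_j p_jQ^{(j)}$. Its output is a pair $(j,y)$, and decision rules on $\bigsqcup_j\mathcal{Y}_j$ are exactly families $(P^{(j)}_{A|Y})_j$, one rule per block. Writing out the risk gives
\[
R(\theta,Q,(P^{(j)})_j)=\sum_j p_j R(\theta,Q^{(j)},P^{(j)}).
\]
\emph{Bayesian case (DSA).} Integrate against $\lambda$. The objective separates into independent blockwise problems, so the infimum of the sum equals the sum of the infima: $R^*_{\mathrm B}(Q)=\sum_j p_jR^*_{\mathrm B}(Q^{(j)})$. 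This uses only that the $P^{(j)}$ are chosen freely and separately; no measurable-selection issue arises because we take the infimum of each block separately.

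\emph{Minimax case (DS-QCVX).} Fix $\delta>0$ and choose $\delta$-optimal rules $P^{(j)}$ for each block. Then
\[
\sup_\theta \sum_j p_jR(\theta,Q^{(j)},P^{(j)})\le \sum_j p_j\sup_\theta R(\theta,Q^{(j)},P^{(j)})\le \max_j R^*_{\mathrm M}(Q^{(j)})+\delta.
\]
Letting $\delta\to0$ gives the claim. Integrability or finiteness assumptions are implicit, since $R^*$ is real-valued by hypothesis.

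\textbf{$G$-invariance.} This is the step needing the most care. Given $Q$, a $G$-action $\sigma$ on $\mathcal Y$, and $g\in G$, take any rule $P_{A|Y}$ for $Q$ and define the transported rule $\tilde P(\cdot\mid y):=g_*P(\cdot\mid \sigma_{g^{-1}}y)$, the pushforward under $a\mapsto ga$. Substitute $x=gx'$ and $y=\sigma_g y'$ in the risk, then apply model and loss invariance. This yields
\[
R(g\theta,\,g\circ_\sigma Q,\,\tilde P)=R(\theta,Q,P).
\]
\emph{Minimax.} Since $\theta\mapsto g\theta$ is a bijection of $\Theta$, the supremum over $\theta$ is preserved. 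Hence $R^*_{\mathrm M}(g\circ_\sigma Q)\le R^*_{\mathrm M}(Q)$. Applying the same argument with $g^{-1}$ gives equality.

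\emph{Bayesian.} Integrate the identity against $\lambda$ and change variables using $\lambda(gB)=\lambda(B)$. The main obstacle is measurability: the actions on $\Theta$ and $\mathcal A$ must be measurable for the pushforward and the change of variables to make sense, and I would assume this implicitly. The finite-sum parts of the substitution are routine, since $g$ permutes $\mathcal X$ and $\sigma_g$ permutes $\mathcal Y$.
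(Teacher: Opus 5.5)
Your proposal is correct and follows the paper's proof essentially step by step. For the DPI you compose the decision rule with the post-processing channel; for DSA and DS-QCVX you split decision rules blockwise on the disjoint union; and for $G$-invariance you transport the rule via $P(\cdot\mid y)\mapsto P(g^{-1}(\cdot)\mid\sigma_{g^{-1}}(y))$.

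The only minor differences are two. First, in the minimax direct-sum step you bound $\sup_\theta\sum_j p_j R(\theta,Q^{(j)},P^{(j)})$ by $\sum_j p_j\sup_\theta R(\theta,Q^{(j)},P^{(j)})$, rather than pointwise by $\max_j$ as the paper does. This gives the slightly stronger direct-sum convexity $R^*_{\mathrm{M}}(\bigoplus_j p_jQ^{(j)})\le\sum_j p_jR^*_{\mathrm{M}}(Q^{(j)})$. Second, your explicit $g^{-1}$ argument makes precise the bijectivity of the transported rules, which the paper uses only implicitly.
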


\begin{theorem}\label{thm:opt_PUT_red}
    Suppose a map $R^*: \mathcal{P}(\mathcal{X} \rightarrow *)\rightarrow \mathbb{R}$ satisfies the DPI.
    Then,
    \begin{equation}\label{eq:suff_max}
        \inf\limits_{Q \in \mathcal{Q}_{\mathcal{X},\epsilon}}R^* (Q) = \inf\limits_{Q \in \max(\mathcal{Q}_{\mathcal{X},\epsilon})}R^* (Q).
    \end{equation}
    Moreover, if $R^*$ is also $G$-invariant and satisfies DS-QCVX, then 
    \begin{equation}\label{eq:suff_G_inv_max}
        \inf\limits_{Q \in \mathcal{Q}_{\mathcal{X},\epsilon}}R^* (Q) = \inf\limits_{Q \in \max^G(\mathcal{Q}_{\mathcal{X},\epsilon})}R^* (Q),
    \end{equation}
    where $\max^G(\mathcal{Q}_{\mathcal{X},\epsilon}) := \max(\mathcal{Q}_{\mathcal{X},\epsilon}) \cap \mathcal{Q}_{\mathcal{X},\epsilon}^G$.
\end{theorem}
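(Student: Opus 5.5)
The proof has two parts: reduce to maximal channels using DPI, then symmetrize a maximal channel using $G$-invariance and DS-QCVX.

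\textbf{First claim.} The inequality ``$\le$'' in \eqref{eq:suff_max} is immediate, since $\max(\mathcal{Q}_{\mathcal{X},\epsilon}) \subseteq \mathcal{Q}_{\mathcal{X},\epsilon}$.

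For the reverse inequality, I would show that every $Q \in \mathcal{Q}_{\mathcal{X},\epsilon}$ is dominated by some maximal $\tilde Q \in \max(\mathcal{Q}_{\mathcal{X},\epsilon})$. DPI then gives $R^*(Q) \ge R^*(\tilde Q)$, and taking infima yields the claim.

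To get the dominating maximal element, I would not invoke Zorn on a proper class. Instead I would use a finite reduction.

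1. Every $\epsilon$-LDP channel is Blackwell-equivalent to one whose rows are pairwise non-proportional. Merging proportional rows is a post-processing with a right inverse that splits the merged row back proportionally.
2. After merging, the rows lie on distinct rays of the cone $\{v \ge 0 : e^\epsilon v_x \ge v_{x'}\}$. Each $Q$ is then a direct sum $\bigoplus_y Q_{y,*}$ of elements of this cone, with column sums equal to one.
3. Each cone element decomposes as a nonnegative combination of the finitely many extreme rays of the cone. Splitting every row this way gives a channel $\tilde Q \succsim Q$, since $Q$ is recovered by re-merging. Its rows lie on extreme rays, so after merging it has at most as many rows as there are extreme rays.
4. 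The set of such channels is a compact polytope, namely the nonnegative weights on the extreme rays subject to column sums equal to one. Every LDP channel is dominated by a member of this polytope.
5. Within the polytope, a Blackwell-monotone continuous functional attains a maximum over the (compact) upper set of $Q$. A concrete choice is a strictly Blackwell-monotone one, such as an average over a countable dense family of priors of the Bayes values with strictly convex losses. The maximizer is a maximal element dominating $Q$.

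I expect step 5, proving existence of a maximal dominating element rigorously, to be the main technical obstacle. The fallback is to show the Blackwell preorder restricted to the polytope has closed upper sets. Then Zorn's lemma on this set (a genuine set, not a class) applies, because chains have upper bounds by compactness.

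\textbf{Second claim.} Given $Q$, first replace it with a maximal $\tilde Q \succsim Q$ as above, so that $R^*(\tilde Q) \le R^*(Q)$. Now define the symmetrization
\begin{equation*}
  \bar Q := \bigoplus_{g \in G} \tfrac{1}{|G|}\, g \circ \tilde Q ,
\end{equation*}
where $(g\circ \tilde Q)_{y,x} = \tilde Q_{y,g^{-1}x}$ on the output copy indexed by $g$, with the trivial action on $\mathcal{Y}$.

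1. $\bar Q$ is $\epsilon$-LDP, because each block is a row-wise subset of a permuted LDP channel.
2. $\bar Q$ is $G$-invariant with respect to the output action $h\cdot (g,y) = (hg,y)$.
3. By $G$-invariance of $R^*$, each block satisfies $R^*(g\circ\tilde Q) = R^*(\tilde Q)$. DS-QCVX then gives $R^*(\bar Q) \le R^*(\tilde Q)$.

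It remains to check that $\bar Q$ is maximal. If some $B \succsim \bar Q$, then $B$ also dominates each block. By invariance and maximality of $g\circ\tilde Q$ (maximality is preserved under the relabeling $g$), each block is maximal. I would then argue that a direct sum of maximal channels is maximal, using the extreme-ray characterization: rows on extreme rays with the maximal structure.

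If that argument fails, I would instead take a maximal $\hat Q \succsim \bar Q$ and symmetrize again, using that a $G$-average of maximal elements dominating $\bar Q$ stays above $\bar Q$. Alternatively, within the polytope I would pick a maximal element of the $G$-invariant upper set of $\bar Q$. The $G$-invariant sub-polytope is compact, so the argument from the first part works there.
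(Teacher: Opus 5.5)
\textbf{Gap: maximality of $\bar Q$ in the second claim.} Your inference ``if $B \succsim \bar Q$, then $B$ also dominates each block'' is false. A weighted direct sum is a randomized mixture of experiments, and in general it does not dominate its summands. Here is a concrete case. Take $\mathcal{X}=\{1,2,3\}$, $G=\mathrm{Sym}(\mathcal{X})$, and $\tilde Q$ the binary randomized response on the partition $\{1\},\{2,3\}$, with weight $1/(e^\epsilon+1)$ on these two subsets. Gathering the rows of $\bar Q$ ray by ray gives weight $1/(3(e^\epsilon+1))$ on all six nonempty proper subsets. By Theorem~\ref{thm:extremal_LDP_is_max}, $\bar Q \not\sim \tilde Q$. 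Since $\tilde Q$ is maximal, $\bar Q$ cannot dominate it, so already $B=\bar Q$ violates your claim.

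Your fallbacks do not repair this. Re-symmetrizing a maximal $\hat Q \succsim \bar Q$ just reproduces the same question. Maximizing over the $G$-invariant sub-polytope only gives an element that is maximal among $G$-invariant channels. Membership in $\max^G(\mathcal{Q}_{\mathcal{X},\epsilon})$ requires maximality against arbitrary, non-invariant $B$. Your first-part argument does not transfer directly: it would need every $B$ dominating the candidate to be dominated by some $G$-invariant point of the sub-polytope, and nothing in the proposal provides that.

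\textbf{The missing lemma.} You need the sufficiency half of the extreme-ray characterization, which you only gesture at (``rows on extreme rays with the maximal structure''): every $\epsilon$-LDP channel whose nonzero rows are all extreme directions of $\mathcal{K}_{\mathcal{X},\epsilon}$ is maximal. The proof is short.
\begin{enumerate}
\item Suppose $\bar Q = WB$ with $B$ an $\epsilon$-LDP channel. Then $\bar Q_{y,*}=\sum_b W_{y,b}B_{b,*}$ is a conic combination of cone elements.
\item Extremality forces every nonzero $B_{b,*}$ with $W_{y,b}>0$ onto the ray of $\bar Q_{y,*}$.
\item Each column of $W$ sums to one, so the total mass along each extreme ray is the same for $B$ and $\bar Q$.
\item Gathering rows ray by ray therefore produces the same channel, and row-splitting plus zero-padding give $B\sim\bar Q$.
\end{enumerate}

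With this lemma, $\bar Q$ is maximal, because the column permutation $g$ sends the extreme direction with top set $\mathcal{Z}$ to the one with top set $g\mathcal{Z}$. This is exactly the paper's route via Proposition~\ref{prop:max_LDP_geo} and Lemma~\ref{lem:ext_ray_LDP}.

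The same lemma shows that the split channel in your step~3 is already maximal, so your step~5 is unnecessary. Step~5 also relies on a strictly Blackwell-monotone continuous functional that you assert but never construct. Proving strict monotonicity for any natural candidate, for example $\sum_y\|Q_{y,*}\|_2$, needs essentially this same ray-wise mass-conservation argument.
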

The proofs of the above proposition and theorem are in Appendices~\ref{sec:pf_BM_structure} and~\ref{sec:pf_opt_PUT_red}, respectively.

\begin{remark}\label{rmk:info_meas}
    Although we developed our framework within the context of privacy-preserving statistical decision-making—where $R^*$ represents the optimal Bayesian or minimax risk—it is also applicable to maximizing utility functions based on information-theoretic measures. 
    For example, $f$-divergences between induced output distributions \cite{kairouzExtremalMechanismsLocal2016b}, the mutual information between the input and output of an LDP channel $Q$ under a uniformly distributed input $X$ \cite{wangLocalDifferentialPrivate2019}, and the Fisher information of the output $Y$ with respect to a parameter $\theta$ \cite{barnesFisherInformationLocal2020}, all satisfy the DPI (with the inequality reversed relative to \eqref{eq:DPI}) and DSA.
\end{remark}

\subsection{Characterizing Maximal LDP Channels}\label{sec:max_LDP}
While Theorem~\ref{thm:opt_PUT_red} reduces the domain of optimization in calculating the optimal PUT, it remains unclear whether there is a computationally tractable way to calculate it.
One of the main challenges is that $\max(\mathcal{Q}_{\mathcal{X},\epsilon})$ contains a matrix with arbitrarily large dimension in general.
As a remedy, we first establish a one-to-one correspondence between $\max(\mathcal{Q}_{\mathcal{X},\epsilon})/\sim$ and a specific finite-dimensional polytope $\mathcal{S}_{\mathcal{X},\epsilon}$.
This characterization is closely related to the concept of extremal mechanisms \cite{kairouzExtremalMechanismsLocal2016b}, which were previously shown to form a cofinal set\footnote{A subclass $\mathcal{B} \subseteq \mathcal{A}$ is called a cofinal class of a preordered set $\mathcal{A}$ if $\forall A \in \mathcal{A},\; \exists B \in \mathcal{B},\; A \precsim B$. In general, any cofinal set contains $\max(\mathcal{A})$, but the converse does not hold.} of $\max(\mathcal{Q}_{\mathcal{X},\epsilon})/\sim$ \cite{namQuantumAdvantagePrivate2025}.
In this work, we strengthen these prior results by explicitly filtering out non-maximal extremal mechanisms, thereby yielding an exact bijection with a polytope. The following definition formalizes this refined notion.

\begin{definition}[Extremal LDP channels\footnote{The original work \cite{kairouzExtremalMechanismsLocal2016b} considered $\mathcal{Y} = 2^{\mathcal{X}}$. Our result shows that an extremal LDP channel containing an all-ones row (up to a multiplicative factor) is not maximal.}]
    Let $\mathcal{B}(\mathcal{X}) = 2^{\mathcal{X}} \setminus \{\emptyset, \mathcal{X}\}$.
    The staircase matrix $S_{\mathcal{X},\epsilon} \in \{1,e^\epsilon\}^{\mathcal{B}(\mathcal{X}) \times \mathcal{X}}$ is defined by 
    \begin{equation}\label{eq:staircase_mat}
        (S_{{\mathcal{X}},\epsilon})_{y,x} := \begin{cases}
            e^\epsilon & \text{if } x \in y 
            \\ 1 & \text{otherwise}
        \end{cases}.
    \end{equation}
    
    Let the set of weight vectors, which we refer to as the maximal $\epsilon$-LDP polytope, be defined as
    \begin{equation}\label{eq:stot_const_staircase}
        \mathcal{S}_{\mathcal{X},\epsilon} := \{c \in \mathbb{R}_+^{\mathcal{B}(\mathcal{X})} \mid c^\top S_{\mathcal{X},\epsilon} = \mathbf{1}_\mathcal{X}^\top \}.
    \end{equation}
    Then, we define a map $Q_{\mathrm{ex}}^{(\cdot)}$ defined on $\mathcal{S}_{\mathcal{X},\epsilon}$ by
    \begin{equation}\label{eq:extremal_LDP_ch}
        Q_{\mathrm{ex}}^{(c)} := \bigoplus_{y \in \mathcal{B}(\mathcal{X})} c_y(S_{\mathcal{X},\epsilon})_{y,*}.
    \end{equation}
    For $c \in \mathcal{S}_{\mathcal{X},\epsilon}$, we say $Q_{\mathrm{ex}}^{(c)}$ is an extremal $\epsilon$-LDP channel with weight vector $c$.
\end{definition}

\begin{theorem}\label{thm:extremal_LDP_is_max}
    The map $Q_{\mathrm{ex}}^{(\cdot)}$ defined in \eqref{eq:extremal_LDP_ch} establishes a one-to-one correspondence between $\mathcal{S}_{\mathcal{X},\epsilon}$ and $\max(\mathcal{Q}_{\mathcal{X},\epsilon})/\sim$.
\end{theorem}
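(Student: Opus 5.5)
The proof establishes three claims: every $Q_{\mathrm{ex}}^{(c)}$ is maximal; every maximal channel is equivalent to some $Q_{\mathrm{ex}}^{(c)}$; and distinct weight vectors give inequivalent channels. The common tool is a normal form for Blackwell equivalence. Two channels $Q,\tilde Q$ are equivalent iff, after deleting zero rows and merging rows that are positive multiples of each other (summing them), they coincide up to a row permutation. Merging is realized by a deterministic post-processing, and its inverse by splitting rows proportionally, so the merged form is equivalent to the original. Conversely, if $Q\sim\tilde Q$ in merged form, then $\tilde Q=WQ$ and $Q=V\tilde Q$. Each row of $Q$ is then a nonnegative combination of rows of $VW Q$. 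A standard extreme-ray argument on the cone generated by pairwise non-proportional rows forces $VW$ to act as a permutation on rays. I would record this canonical form first.

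\textbf{Every maximal channel is equivalent to some extremal channel.} Write $\mathcal{Q}_{\mathcal{X},\epsilon}$ rows as vectors $q_y\ge0$ with $\max_x q_{y,x}\le e^\epsilon\min_x q_{y,x}$. The set of such rows is a polyhedral cone $C$. Its extreme rays are exactly the directions of the rows of $S_{\mathcal{X},\epsilon}$ indexed by all of $2^{\mathcal{X}}$, including the $\emptyset$ and $\mathcal{X}$ rows, both proportional to $\mathbf{1}$. Hence each row decomposes as $q_y=\sum_{b} w_{y,b} s_b$, and splitting each row of $Q$ accordingly gives $Q\precsim Q'$ with $Q'=\bigoplus_{b\in 2^{\mathcal X}} c_b s_b$. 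The constant rows then need to be eliminated. I would show that a row $t\mathbf{1}^\top$ can be redistributed. For instance, $\mathbf{1}^\top$ is a positive combination of $s_{\{x\}}$ over $x$ after normalization: $\sum_x s_{\{x\}}=(e^\epsilon+|\mathcal X|-1)\mathbf{1}^\top$. Replacing $t\mathbf 1^\top$ by the rows $\frac{t}{e^\epsilon+|\mathcal X|-1}s_{\{x\}}$ yields a strictly more informative channel, because merging those rows recovers the original. Any channel $Q$ thus satisfies $Q\precsim Q_{\mathrm{ex}}^{(c)}$ for some $c\in\mathcal S_{\mathcal X,\epsilon}$; the column-sum condition is exactly column-stochasticity. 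The same argument shows that a channel with a constant row is never maximal. This requires $|\mathcal X|\ge2$; the case $|\mathcal X|=1$ is degenerate and handled separately.

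\textbf{Every extremal channel is maximal.} This is the main obstacle. Suppose $Q_{\mathrm{ex}}^{(c)}\precsim B$. By the previous step, $B\precsim Q_{\mathrm{ex}}^{(c')}$, so $Q_{\mathrm{ex}}^{(c)}=WQ_{\mathrm{ex}}^{(c')}$. Each row $c_b s_b$ is then a nonnegative combination of rows $c'_{b'}s_{b'}$. Since $s_b$ is an extreme ray of $C$ and every $s_{b'}$ lies in $C$, all contributing rows must be proportional to $s_b$, i.e.\ $b'=b$. This gives $W$ block structure: row $b$ of $Q$ receives mass only from row $b$ of $Q'$. Column-stochasticity of $W$ forces $\sum_{b}W_{b,b}=1$-type constraints column by column, so $c_b\le c'_b$ for all $b$. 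Both $c$ and $c'$ lie in $\mathcal S_{\mathcal X,\epsilon}$, so $c^\top S=c'^\top S$ with $S\ge1$ entrywise. Then $(c'-c)^\top S\mathbf 1=0$ with $c'-c\ge0$ forces $c=c'$. Hence $B\precsim Q_{\mathrm{ex}}^{(c)}$, and the extremal channel is maximal.

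\textbf{Injectivity on equivalence classes.} If $Q_{\mathrm{ex}}^{(c)}\sim Q_{\mathrm{ex}}^{(c')}$, the previous argument gives $c\le c'$ and symmetrically $c'\le c$. The map is therefore injective, and surjective onto $\max(\mathcal Q_{\mathcal X,\epsilon})/\sim$ by the first step combined with maximality.

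The delicate points are confirming that the rows $s_b$ for $b\in\mathcal B(\mathcal X)$ are genuinely extreme rays of $C$ and pairwise non-proportional, which holds for $\epsilon>0$, and carrying out the $W$-bookkeeping carefully when zero weights $c_b=0$ occur.
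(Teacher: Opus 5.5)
Your main line is sound, but one claim in Step 1 is false as stated. For $|\mathcal{X}|\ge 2$ the constant direction $\mathbf{1}$ is \emph{not} an extreme ray of $C=\mathcal{K}_{\mathcal{X},\epsilon}$: your own identity $\sum_x s_{\{x\}}=(e^\epsilon+|\mathcal{X}|-1)\mathbf{1}$ writes it as a sum of non-proportional elements of $C$. The extreme rays are exactly $\mathcal{R}(s_b)$ for $b\in\mathcal{B}(\mathcal{X})$ (Lemma~\ref{lem:ext_ray_LDP}), which is precisely why $\emptyset$ and $\mathcal{X}$ are excluded from $\mathcal{B}(\mathcal{X})$.

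Nothing downstream uses the false part. Step 1 only needs that the $s_b$ generate $C$, and Step 2 only needs that $s_b$ is extreme for $b\in\mathcal{B}(\mathcal{X})$. However, you assert both facts without proof, and together they are the geometric core of the theorem, so write them out. Both are short. For $0\neq v\in C$ with distinct values $\alpha=t_0<t_1<\dots<t_k=\beta\le e^\epsilon\alpha$ and level sets $b_i=\{x: v_x\ge t_i\}\in\mathcal{B}(\mathcal{X})$,
\[
v=\Bigl(\alpha-\tfrac{\beta-\alpha}{e^\epsilon-1}\Bigr)\mathbf{1}+\sum_{i=1}^{k}\tfrac{t_i-t_{i-1}}{e^\epsilon-1}\,s_{b_i},
\]
where the first coefficient is nonnegative exactly because $\beta\le e^\epsilon\alpha$; $\mathbf{1}$ is then pushed onto singletons as you do. Conversely, suppose $s_b=u+w$ with $u,w\in C$. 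For $z\notin b$ and $z'\in b$, the inequalities $e^\epsilon u_z-u_{z'}\ge0$ and $e^\epsilon w_z-w_{z'}\ge0$ sum to $e^\epsilon\cdot 1-e^\epsilon=0$. Hence $u_{z'}=e^\epsilon u_z$ for all such pairs, which forces $u\in\mathcal{R}(s_b)$.

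Minor points:
\begin{itemize}
\item The converse half of your general normal form is never used. The suggested extreme-ray argument would not prove it, since rows of a general channel need not be extreme in the cone they span, so drop it.
\item ``Strictly more informative'' and ``a channel with a constant row is never maximal'' do not follow from a domination argument, though neither is needed.
\item In Step 2 the fact you want is simply $W_{b,b}\le 1$.
\item For $|\mathcal{X}|=1$ the statement is false ($\mathcal{S}_{\mathcal{X},\epsilon}=\emptyset$), so one assumes $|\mathcal{X}|\ge 2$, as the paper tacitly does.
\end{itemize}

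With this repaired, your argument is correct and takes a genuinely different route from the paper's. The paper first proves the row-wise characterization of Proposition~\ref{prop:max_LDP_geo}: $Q$ is maximal iff every nonzero row is an extreme direction, with the ``if'' half shown via conservation of mass along each extreme ray under $Q=W\tilde Q$. It then identifies the extreme directions through the active-constraint kernel criterion of Proposition~\ref{prop:ex_ray_rank}. Finally it obtains surjectivity by gathering rows, and injectivity from $c_b=W_{b,b}V_{b,b}c_b$ using both directions of the equivalence.

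You instead prove cofinality first, then get maximality and injectivity from a single one-sided computation. The relation $Q_{\mathrm{ex}}^{(c)}=WQ_{\mathrm{ex}}^{(c')}$ forces $c_b=W_{b,b}c'_b\le c'_b$. Then $(c'-c)^\top S_{\mathcal{X},\epsilon}=0$ with $S_{\mathcal{X},\epsilon}$ entrywise positive gives $c=c'$. This is more economical: it never needs the general row characterization or the two-sided $W,V$ bookkeeping, and it shows constructively, via the singleton redistribution, why constant rows are excluded.

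What the paper's route buys is Proposition~\ref{prop:max_LDP_geo} as a standalone tool. The paper reuses it in the proof of Theorem~\ref{thm:opt_PUT_red} to show that the $G$-symmetrized channel is still maximal. In your framework you would recover this by noting that any channel whose nonzero rows lie on rays $\mathcal{R}(s_b)$, $b\in\mathcal{B}(\mathcal{X})$, gathers into some $Q_{\mathrm{ex}}^{(c)}$ and is therefore maximal.
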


The proof of the above theorem is in Appendix~\ref{sec:pf_extremal}.
Briefly speaking, the main idea of the proof is to capture the intrinsic geometric structure of the set of LDP channels.
In the proof, we first show that an $\epsilon$-LDP channel $Q$ is a maximal LDP channel if and only if every non-zero row of it corresponds to an extreme direction of a certain polyhedral cone defining the LDP constraint (Proposition~\ref{prop:max_LDP_geo}).
Then, we show that such an extreme direction corresponds to a non-empty proper subset of $\mathcal{X}$ (Lemma~\ref{lem:ext_ray_LDP}).

As a direct consequence of Theorem~\ref{thm:extremal_LDP_is_max}, in minimizing $R^*(Q)$ satisfying the DPI over $Q \in \mathcal{Q}_{\mathcal{X},\epsilon}$, it is sufficient to consider a finite-dimensional polytope $\mathcal{S}_{\mathcal{X},\epsilon}$, i.e.,
\begin{equation}
    \inf\limits_{Q \in \mathcal{Q}_{\mathcal{X},\epsilon}} R^*(Q) = \min\limits_{c \in \mathcal{S}_{\mathcal{X},\epsilon}} R^*\left(Q_{\mathrm{ex}}^{(c)}\right).
\end{equation}

\begin{remark}\label{rmk:LP_DPI}
    If $R^*$ is also concave, then the above optimization problem can be computed by vertex enumeration.
    Also, if $R^*$ admits an extension to $\mathbb{R}_+^{* \times \mathcal{X}}$ that also satisfies DSA over $\mathbb{R}_+^{* \times \mathcal{X}}$, such as the optimal Bayesian risk $R_{\mathrm{B}}^*$, then the above optimization problem can be computed by linear programming (cf. \cite[Thm.~4]{kairouzExtremalMechanismsLocal2016b}).
\end{remark}

Theorem~\ref{thm:extremal_LDP_is_max} also directly recovers the previous result \cite[Thm.~18]{kairouzExtremalMechanismsLocal2016b}\footnote{The result in \cite[Thm.~18]{kairouzExtremalMechanismsLocal2016b} considers more general privacy constraint, $(\epsilon,\delta)$-LDP, where $(\epsilon,0)$-LDP is the same as $\epsilon$-LDP.}: whenever the input data is binary ($|\mathcal{X}|=2$), then binary \emph{randomized response} \cite{warnerRandomizedResponseSurvey1965} is the unique maximal LDP channel up to equivalence.
We formally state this fact in the following corollary, whose proof is omitted.
\begin{corollary}
    If $|\mathcal{X}| = 2$, then $Q \in \max(\mathcal{Q}_{\mathcal{X},\epsilon})$ if and only if 
    \begin{equation}
        Q \sim \frac{1}{e^\epsilon + 1}\begin{pmatrix}
            e^\epsilon & 1 \\ 1 & e^\epsilon
        \end{pmatrix}.
    \end{equation}
\end{corollary}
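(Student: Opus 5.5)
The corollary follows from Theorem~\ref{thm:extremal_LDP_is_max} once we compute the polytope $\mathcal{S}_{\mathcal{X},\epsilon}$ explicitly for $|\mathcal{X}|=2$.

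Write $\mathcal{X}=\{1,2\}$. Then $\mathcal{B}(\mathcal{X})=\{\{1\},\{2\}\}$, and the staircase matrix is
\begin{equation*}
S_{\mathcal{X},\epsilon}=\begin{pmatrix} e^\epsilon & 1\\ 1 & e^\epsilon\end{pmatrix},
\end{equation*}
with rows indexed by $\{1\}$ and $\{2\}$. The defining constraint $c^\top S_{\mathcal{X},\epsilon}=\mathbf{1}^\top$ is the linear system
\begin{equation*}
e^\epsilon c_{\{1\}}+c_{\{2\}}=1,\qquad c_{\{1\}}+e^\epsilon c_{\{2\}}=1 .
\end{equation*}
Its determinant is $e^{2\epsilon}-1\neq 0$ because $\epsilon>0$. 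So the system has the unique solution $c_{\{1\}}=c_{\{2\}}=1/(e^\epsilon+1)$, which is nonnegative. Hence $\mathcal{S}_{\mathcal{X},\epsilon}$ is the single point $c^\star$ with these coordinates. The corresponding extremal channel $Q_{\mathrm{ex}}^{(c^\star)}$ is exactly the randomized response matrix $\frac{1}{e^\epsilon+1}\begin{pmatrix} e^\epsilon & 1\\ 1& e^\epsilon\end{pmatrix}$.

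By Theorem~\ref{thm:extremal_LDP_is_max}, $\max(\mathcal{Q}_{\mathcal{X},\epsilon})/\sim$ is in bijection with this one-point set, so it consists of a single equivalence class. That class is the class of $Q_{\mathrm{ex}}^{(c^\star)}$, which is maximal because it lies in the image of the map.

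For the forward direction, take $Q\in\max(\mathcal{Q}_{\mathcal{X},\epsilon})$. Its class is the unique class, so $Q\sim Q_{\mathrm{ex}}^{(c^\star)}$.

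For the converse, suppose $Q\sim Q_{\mathrm{ex}}^{(c^\star)}$. We first check that $Q$ is $\epsilon$-LDP. Write $Q=WQ_{\mathrm{ex}}^{(c^\star)}$ for some channel $W$. Post-processing preserves the LDP inequalities because $W\geq 0$: from $e^\epsilon Q_{\mathrm{ex}*,x}^{(c^\star)}-Q_{\mathrm{ex}*,x'}^{(c^\star)}\ge 0$ we get $e^\epsilon Q_{*,x}-Q_{*,x'}=W(e^\epsilon Q_{\mathrm{ex}*,x}^{(c^\star)}-Q_{\mathrm{ex}*,x'}^{(c^\star)})\ge 0$.

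Next we check that $Q$ is maximal. Suppose $Q\precsim B$ for some $B\in\mathcal{Q}_{\mathcal{X},\epsilon}$. By transitivity of the preorder, $Q_{\mathrm{ex}}^{(c^\star)}\precsim B$. Maximality of $Q_{\mathrm{ex}}^{(c^\star)}$ then gives $B\precsim Q_{\mathrm{ex}}^{(c^\star)}\precsim Q$, so $B\precsim Q$. Thus maximality is invariant under $\sim$, and $Q$ is maximal.

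There is no real obstacle here. The only points needing care are that the $2\times 2$ system is nondegenerate (this is where $\epsilon>0$ is used), and that both $\epsilon$-LDP membership and maximality pass to Blackwell-equivalent channels.
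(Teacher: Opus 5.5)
Your proof is correct and takes the route the paper intends. The paper omits the proof and calls it a direct consequence of Theorem~\ref{thm:extremal_LDP_is_max}; for $|\mathcal{X}|=2$ the polytope $\mathcal{S}_{\mathcal{X},\epsilon}$ is the single point $c^\star$, whose extremal channel is binary randomized response. Your extra checks that $\epsilon$-LDP membership and maximality are preserved under $\sim$ correctly fill in the converse direction, which the paper leaves implicit.
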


\subsection{Characterizing Group Invariant Maximal LDP Channels}\label{sec:G_inv}

Based on Theorem~\ref{thm:extremal_LDP_is_max}, we further characterize $\max^G(\mathcal{Q}_{\mathcal{X},\epsilon})/\sim$.
From the one-to-one correspondence in Theorem~\ref{thm:extremal_LDP_is_max}, a $G$-action on $\mathcal{Q}_{\mathcal{X},\epsilon}$ induces a $G$-action on $\mathcal{S}_{\mathcal{X},\epsilon}$, and vice versa.
Thereby, the notion of $G$-invariance of a channel $Q \in \mathcal{Q}_{\mathcal{X},\epsilon}$ is translated into the notion of $G$-invariance of a weight vector $c \in \mathcal{S}_{\mathcal{X},\epsilon}$.
We first introduce an induced $G$-action on the maximal $\epsilon$-LDP polytope $\mathcal{S}_{\mathcal{X},\epsilon}$.
Because $\mathcal{S}_{\mathcal{X},\epsilon}$ is defined based on $\mathcal{B}(\mathcal{X})$—the collection of non-empty proper subsets of $\mathcal{X}$—there is a natural $G$-action on $\mathcal{B}(\mathcal{X})$ defined by $gy := \{gx \mid x \in y\}$.
This action on $\mathcal{B}(\mathcal{X})$, in turn, induces a $G$-action on $\mathcal{S}_{\mathcal{X},\epsilon}$ defined by $(g c)_y := c_{g^{-1}y}$.

\begin{definition}
    A weight vector $c \in \mathcal{S}_{\mathcal{X},\epsilon}$ is $G$-invariant if $gc = c$ for all $g \in G$.
    Equivalently, $c_y = c_{y'}$ holds for every orbit $\mathcal{O} \in \mathcal{B}(\mathcal{X})/G$ and $y,y' \in \mathcal{O}$.
\end{definition}

\begin{definition}\label{def:G_inv_poly}
    We define the $G$-invariant maximal $\epsilon$-LDP polytope by
    \begin{equation}\label{eq:G_inv_poly}
        \mathcal{S}_{\mathcal{X},\epsilon}^G := \left\{ w \in \mathbb{R}_+^{\mathcal{B}(\mathcal{X})/G} \middle|\; \forall \mathfrak{O} \in \mathcal{X}/G, \quad \sum_{\mathcal{O} \in \mathcal{B}(\mathcal{X})/G} w_{\mathcal{O}} \tilde{r}_{\mathfrak{O}, \mathcal{O}} = 1 \right\},
    \end{equation}
    where 
    \begin{equation}\label{eq:G_inv_r_v}
        r_{\mathfrak{O}, \mathcal{O}} := |\{y \in \mathcal{O} \mid x \in y\}|, \quad \tilde{r}_{\mathfrak{O}, \mathcal{O}} := e^\epsilon r_{\mathfrak{O}, \mathcal{O}} + (|\mathcal{O}| - r_{\mathfrak{O}, \mathcal{O}}),
    \end{equation}
    and $x \in \mathfrak{O}$ is an arbitrary representative element of an orbit $\mathfrak{O} \in \mathcal{X}/G$.\footnote{Because each $g \in G$ acts bijectively on the orbit $\mathcal{O}$ and $x \in y$ if and only if $gx \in gy$, $r_{\mathfrak{O}, \mathcal{O}}$ is well-defined.}
    Then, we define a map $Q_{\mathrm{ex},G}^{(\cdot)}$ defined on $\mathcal{S}^G_{\mathcal{X},\epsilon}$ by
    \begin{equation}\label{eq:G_inv_extremal_LDP_ch}
        Q_{\mathrm{ex},G}^{(w)} := \bigoplus_{\mathcal{O} \in \mathcal{B}(\mathcal{X})/G} w_\mathcal{O}S_{\mathcal{X},\mathcal{O},\epsilon},
    \end{equation}
    where $S_{\mathcal{X},\mathcal{O},\epsilon} := \bigoplus_{y \in \mathcal{O}} (S_{\mathcal{X},\epsilon})_{y,*}$.
\end{definition}

\begin{theorem}\label{thm:G_inv_poly}
    The map $Q_{\mathrm{ex},G}^{(\cdot)}$ defined in \eqref{eq:G_inv_extremal_LDP_ch} establishes a one-to-one correspondence between $\mathcal{S}_{\mathcal{X},\epsilon}^G$ and $\max^G(\mathcal{Q}_{\mathcal{X},\epsilon} )/\sim$.
\end{theorem}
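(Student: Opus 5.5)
The plan is to reduce Theorem~\ref{thm:G_inv_poly} to Theorem~\ref{thm:extremal_LDP_is_max}. Two things must be shown. First, the $G$-invariant weight vectors in $\mathcal{S}_{\mathcal{X},\epsilon}$ are in affine bijection with $\mathcal{S}^G_{\mathcal{X},\epsilon}$. Second, under the bijection $c \mapsto Q_{\mathrm{ex}}^{(c)}$, the $G$-invariant $c$ correspond exactly to the classes in $\max(\mathcal{Q}_{\mathcal{X},\epsilon})/\sim$ that contain a $G$-invariant channel.

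\textbf{Step 1 (polytope bijection).} Given $w \in \mathbb{R}_+^{\mathcal{B}(\mathcal{X})/G}$, define $c_y := w_{\mathcal{O}}$ for $y \in \mathcal{O}$. Then $c$ is $G$-invariant, and every $G$-invariant $c$ arises this way. For $x \in \mathfrak{O}$ we compute
\[
(c^\top S_{\mathcal{X},\epsilon})_x = \sum_{\mathcal{O}} w_{\mathcal{O}} \sum_{y \in \mathcal{O}} (S_{\mathcal{X},\epsilon})_{y,x} = \sum_{\mathcal{O}} w_{\mathcal{O}}\bigl(e^\epsilon r_{\mathfrak{O},\mathcal{O}} + |\mathcal{O}| - r_{\mathfrak{O},\mathcal{O}}\bigr) = \sum_{\mathcal{O}} w_{\mathcal{O}}\tilde r_{\mathfrak{O},\mathcal{O}}.
\]
This quantity depends only on the orbit of $x$. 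Hence $c \in \mathcal{S}_{\mathcal{X},\epsilon}$ iff $w \in \mathcal{S}^G_{\mathcal{X},\epsilon}$. Moreover, $Q^{(w)}_{\mathrm{ex},G}$ is just $Q^{(c)}_{\mathrm{ex}}$ with its rows regrouped by orbit, so the two are equal up to a row permutation and in particular $\sim$-equivalent.

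\textbf{Step 2 ($G$-invariant $c$ gives a $G$-invariant maximal channel).} Equip the output alphabet $\mathcal{B}(\mathcal{X})$ with the action $\sigma_g(y) = gy$. Then
\[
(g\circ_\sigma Q^{(c)}_{\mathrm{ex}})_{y,x} = c_{g^{-1}y}(S_{\mathcal{X},\epsilon})_{g^{-1}y,g^{-1}x} = c_y (S_{\mathcal{X},\epsilon})_{y,x},
\]
because $g^{-1}x \in g^{-1}y$ iff $x \in y$ and $c$ is invariant. So $Q^{(c)}_{\mathrm{ex}} \in \mathcal{Q}^G_{\mathcal{X},\epsilon}$, and it is maximal by Theorem~\ref{thm:extremal_LDP_is_max}. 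Hence it lies in $\max^G$.

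\textbf{Step 3 (converse, the main obstacle).} Let $Q \in \max^G(\mathcal{Q}_{\mathcal{X},\epsilon})$ be invariant under some action $\sigma$ on $\mathcal{Y}$. By Theorem~\ref{thm:extremal_LDP_is_max}, $Q \sim Q^{(c)}_{\mathrm{ex}}$ for a unique $c$, and we must show $c$ is $G$-invariant. The canonical $c$ should be obtainable from $Q$ by the characterization quoted before the theorem (Proposition~\ref{prop:max_LDP_geo} and Lemma~\ref{lem:ext_ray_LDP}). Every nonzero row $Q_{y,*}$ of a maximal channel is a positive multiple of a staircase row $(S_{\mathcal{X},\epsilon})_{s(y),*}$, where $s(y) = \{x : Q_{y,x} = \max_{x'} Q_{y,x'}\}$. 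Merging proportional rows is an equivalence, and I expect the proof of Theorem~\ref{thm:extremal_LDP_is_max} to show that $c_s = \sum_{y : s(y) = s} Q_{y,x}/(S_{\mathcal{X},\epsilon})_{s,x}$. I would first verify that this formula is well defined, i.e.\ independent of $x$; this holds because proportional rows share the same ratio.

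Invariance then follows. From $Q_{\sigma_g y, gx} = Q_{y,x}$ we get $s(\sigma_g y) = g\,s(y)$, so $\sigma_g$ bijects $\{y : s(y) = s\}$ onto $\{y : s(y) = gs\}$. Since $(S_{\mathcal{X},\epsilon})_{gs,gx} = (S_{\mathcal{X},\epsilon})_{s,x}$, this gives $c_{gs} = c_s$. The delicate points are:
\begin{itemize}
\item justifying that the $c$ attached to the equivalence class is exactly this canonical merge, using the injectivity part of Theorem~\ref{thm:extremal_LDP_is_max};
\item noting that zero rows are harmless, since $\sigma_g$ permutes them among themselves.
\end{itemize}

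\textbf{Step 4 (conclusion).} Combining Steps 1--3, the map $w \mapsto c \mapsto [Q^{(c)}_{\mathrm{ex}}] = [Q^{(w)}_{\mathrm{ex},G}]$ is injective, by Step 1 and Theorem~\ref{thm:extremal_LDP_is_max}. By Step 3 it is surjective onto $\max^G(\mathcal{Q}_{\mathcal{X},\epsilon})/\sim$, which establishes the claimed one-to-one correspondence.
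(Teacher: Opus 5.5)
Your proposal is correct. Steps 1, 2 and 4 coincide with the paper's argument: the paper's injectivity part is exactly your Step 1 composed with the injectivity of $Q_{\mathrm{ex}}^{(\cdot)}$. Your Step 2 spells out the membership $Q_{\mathrm{ex},G}^{(w)} \in \max^G(\mathcal{Q}_{\mathcal{X},\epsilon})$, which the paper leaves implicit.

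The genuine difference is in the surjectivity step (your Step 3). The paper never looks at the rows of $Q$ again. Instead it shows that the induced action respects the Blackwell order: if $Q = W Q_{\mathrm{ex}}^{(c)}$, then $g\circ_\sigma Q = (g\circ_\sigma W)(g\circ Q_{\mathrm{ex}}^{(c)})$ with $(g\circ_\sigma W)_{z,y} = W_{\sigma_{g^{-1}}(z),g^{-1}y}$, and symmetrically in the other direction. This yields the chain $Q_{\mathrm{ex}}^{(c)} \sim Q = g\circ_\sigma Q \sim g\circ Q_{\mathrm{ex}}^{(c)} = Q_{\mathrm{ex}}^{(gc)}$, and injectivity of $Q_{\mathrm{ex}}^{(\cdot)}$ forces $gc = c$.

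You instead read $c$ off $Q$ directly, using the argmax sets $s(y)$ and the row scalings. You then use $Q_{\sigma_g(y),gx} = Q_{y,x}$ to biject the rows contributing to $c_s$ onto those contributing to $c_{gs}$. The point you flag as delicate is settled by the paper itself. The surjectivity construction in the proof of Theorem~\ref{thm:extremal_LDP_is_max} is literally your merge, $c_y = \sum_{z: y_z = y} q_z$, and it shows $Q \sim Q_{\mathrm{ex}}^{(c)}$ for that $c$. Injectivity then identifies it as the unique weight vector of the class.

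What each route buys:
\begin{itemize}
\item Your route is more elementary and constructive: it exhibits the invariant weight vector explicitly, at the cost of invoking Proposition~\ref{prop:max_LDP_geo} and Lemma~\ref{lem:ext_ray_LDP} a second time.
\item The paper's route uses only injectivity plus equivariance of $\sim$, so it does not depend on the row-level description of maximal channels. It also shows that the $G$-action descends to $\max(\mathcal{Q}_{\mathcal{X},\epsilon})/\sim$, which is what justifies transferring the action from channels to $\mathcal{S}_{\mathcal{X},\epsilon}$ as described before the theorem.
\end{itemize}
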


The proof of the above theorem is in Appendix~\ref{sec:pf_G_inv_poly}.
A direct consequence of this theorem and Theorem~\ref{thm:opt_PUT_red} is that in minimizing a $G$-invariant map $R^*$ satisfying the DPI and DS-QCVX, it is sufficient to restrict the optimization to $\mathcal{S}_{\mathcal{X},\epsilon}^G$, i.e., 
\begin{equation}\label{eq:suff_G_inv_max_red}
    \inf\limits_{Q \in \mathcal{Q}_{\mathcal{X},\epsilon}} R^*(Q) = \min\limits_{w \in \mathcal{S}^G_{\mathcal{X},\epsilon}} R^*\left(Q_{\mathrm{ex},G}^{(w)}\right).
\end{equation}

\begin{remark}
    Under the same conditions as in Remark~\ref{rmk:LP_DPI}, the above optimization problem can also be computed via vertex enumeration or linear programming.
\end{remark}

If $G$ acts transitively on $\mathcal{X}$, there is only a single input orbit, meaning $\mathcal{X}/G = \{\mathcal{X}\}$.
When combined with an additional condition that $R^*$ is concave, this fact allows us to exactly characterize the optimal weight vector, and thereby the optimal PUT, without relying on optimization solvers.
The proof of the following corollary is in Appendix~\ref{sec:PUT_transitive}.

\begin{corollary}\label{cor:PUT_transitive}
    Let $G$ be a group that acts transitively on an input alphabet $\mathcal{X}$ of size $m = |\mathcal{X}|$, and let $R^*: \mathcal{P}(\mathcal{X} \rightarrow *)\rightarrow \mathbb{R}$ be a $G$-invariant map satisfying the DPI, DS-QCVX, and CCV.
    Then, there exists an optimal LDP channel in the form of $Q = w_{\mathcal{O}} S_{\mathcal{X},\mathcal{O},\epsilon}$ for some orbit $\mathcal{O} \in \mathcal{B}(\mathcal{X})/G$, where
    \begin{equation}
        w_{\mathcal{O}} = \frac{m}{|\mathcal{O}| (k_{\mathcal{O}} e^\epsilon + m - k_{\mathcal{O}})},
    \end{equation}
    and $k_{\mathcal{O}} := |y|$ is the uniform subset size for all $y \in \mathcal{O}$.\footnote{Because $G$ acts transitively on any given orbit $\mathcal{O} \in \mathcal{B}(\mathcal{X})/G$ and $|y| = |gy|$ for all $g \in G$ and $y \in \mathcal{B}(\mathcal{X})$, $k_{\mathcal{O}}$ is well-defined.}
    Also, we have
    \begin{equation}\label{eq:PUT_transitive}
        \inf\limits_{Q \in \mathcal{Q}_{\mathcal{X},\epsilon}}R^*(Q) = \min\limits_{\mathcal{O} \in \mathcal{B}(\mathcal{X})/G} R^*\left( w_{\mathcal{O}} S_{\mathcal{X},\mathcal{O},\epsilon} \right),
    \end{equation}
    where the right-hand side is a minimization over a finite set of orbits.
\end{corollary}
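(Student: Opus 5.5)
The plan is to start from the reduction \eqref{eq:suff_G_inv_max_red}, which follows from Theorem~\ref{thm:opt_PUT_red} and Theorem~\ref{thm:G_inv_poly} since $R^*$ is $G$-invariant and satisfies the DPI and DS-QCVX:
\begin{equation*}
\inf_{Q \in \mathcal{Q}_{\mathcal{X},\epsilon}} R^*(Q) = \min_{w \in \mathcal{S}^G_{\mathcal{X},\epsilon}} R^*\bigl(Q^{(w)}_{\mathrm{ex},G}\bigr).
\end{equation*}
Under transitivity, $\mathcal{X}/G = \{\mathcal{X}\}$. The polytope $\mathcal{S}^G_{\mathcal{X},\epsilon}$ is therefore cut out by a \emph{single} linear equation $\sum_{\mathcal{O}} w_{\mathcal{O}} \tilde r_{\mathcal{X},\mathcal{O}} = 1$ with $w \ge 0$. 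Since every $\tilde r_{\mathcal{X},\mathcal{O}} > 0$, this is a scaled simplex. Its vertices are exactly $w = \tilde r_{\mathcal{X},\mathcal{O}}^{-1} e_{\mathcal{O}}$, one for each orbit $\mathcal{O} \in \mathcal{B}(\mathcal{X})/G$.

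The next step computes $\tilde r_{\mathcal{X},\mathcal{O}}$ by double counting the pairs $(x,y)$ with $y \in \mathcal{O}$ and $x \in y$. There are $|\mathcal{O}| k_{\mathcal{O}}$ such pairs. By transitivity, $r_{\mathcal{X},\mathcal{O}} = |\{y\in\mathcal{O} : x\in y\}|$ does not depend on $x$, so the count also equals $m\, r_{\mathcal{X},\mathcal{O}}$. Hence $r = |\mathcal{O}|k_{\mathcal{O}}/m$, and
\begin{equation*}
\tilde r = \frac{|\mathcal{O}|\bigl(k_{\mathcal{O}} e^\epsilon + m - k_{\mathcal{O}}\bigr)}{m}.
\end{equation*}
Its reciprocal is exactly the stated $w_{\mathcal{O}}$. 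For such a vertex, $Q^{(w)}_{\mathrm{ex},G}$ has only the block $w_{\mathcal{O}} S_{\mathcal{X},\mathcal{O},\epsilon}$ nonzero, plus zero rows for the other orbits. Zero rows can be removed without changing the Blackwell equivalence class, so by the DPI $R^*$ is unchanged.

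The remaining step, which I expect to be the main obstacle, is to show that the minimum over the simplex is attained at a vertex. This needs concavity, but CCV in Definition~\ref{def:struct} applies only to mixtures of channels with a \emph{common} output alphabet. A general $w$ is a direct sum $\bigoplus_{\mathcal{O}} w_{\mathcal{O}} S_{\mathcal{X},\mathcal{O},\epsilon}$, not such a mixture. I would write $w = \sum_{\mathcal{O}} p_{\mathcal{O}} v^{\mathcal{O}}$, where $v^{\mathcal{O}}$ is the vertex for $\mathcal{O}$ and $p_{\mathcal{O}} = w_{\mathcal{O}}\tilde r_{\mathcal{O}}$ are probabilities.

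Then I embed every vertex channel into the common output alphabet $\mathcal{Y} = \bigsqcup_{\mathcal{O}} \mathcal{O}$ by padding with zero rows. The convex combination of these padded channels, with weights $p_{\mathcal{O}}$, is exactly $Q^{(w)}_{\mathrm{ex},G}$, by linearity of the map $w\mapsto Q^{(w)}_{\mathrm{ex},G}$ on the common alphabet. CCV then gives
\begin{equation*}
R^*\bigl(Q^{(w)}_{\mathrm{ex},G}\bigr) \ge \sum_{\mathcal{O}} p_{\mathcal{O}} R^*(\text{padded vertex}) \ge \min_{\mathcal{O}} R^*\bigl(w_{\mathcal{O}} S_{\mathcal{X},\mathcal{O},\epsilon}\bigr).
\end{equation*}
Using the DPI once more to drop the zero padding rows, this yields the ``$\ge$'' direction of \eqref{eq:PUT_transitive}. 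The ``$\le$'' direction is immediate, since each $w_{\mathcal{O}} S_{\mathcal{X},\mathcal{O},\epsilon}$ is itself an $\epsilon$-LDP channel: its columns sum to one by the computation of $\tilde r$. The minimum is taken over finitely many orbits, so it is attained, and the minimizing orbit gives the claimed optimal channel.
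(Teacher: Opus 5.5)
Your proposal is correct and follows the paper's proof: you reduce to $\mathcal{S}^G_{\mathcal{X},\epsilon}$ via \eqref{eq:suff_G_inv_max_red}, note that transitivity leaves a single equality constraint, compute $\tilde r_{\mathcal{X},\mathcal{O}}$ by double counting, and use concavity to reach a vertex with one nonzero orbit weight. The paper only cites the fact that a concave function on a polytope attains its minimum at an extreme point, whereas you write $w$ explicitly as a convex combination of the simplex vertices on the common output alphabet $\bigsqcup_{\mathcal{O}}\mathcal{O}$, so that CCV genuinely applies, and then use the DPI to drop the zero rows; this tightens the paper's argument rather than changing its route.
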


\begin{remark}\label{rmk:SS}
    One of the most representative transitive group actions on a finite set $\mathcal{X}$ is the natural permutation action of the symmetric group $G=\mathrm{Sym}(\mathcal{X})$. 
    In this case, each orbit $\mathcal{O} \in \mathcal{B}(\mathcal{X})/G$ is the set of all $k_{\mathcal{O}}$-subsets of $\mathcal{X}$.
    If $R^*$ is $\mathrm{Sym}(\mathcal{X})$-invariant, then the optimal LDP channel $Q = w_\mathcal{O} S_{\mathcal{X},\mathcal{O},\epsilon}$ in the above corollary is a subset selection mechanism \cite{yeOptimalSchemesDiscrete2018}, where
    \begin{equation}
        w_{\mathcal{O}} = \frac{m}{\binom{m}{k_{\mathcal{O}}} (k_{\mathcal{O}} e^\epsilon + m - k_{\mathcal{O}})}.
    \end{equation}
\end{remark}

\begin{remark}
    Corollary~\ref{cor:PUT_transitive} also directly recovers the previous result~\cite[Thm.~1]{wangLocalDifferentialPrivate2019}, which establishes that for a uniformly distributed input $X \in \mathcal{X}$, the mutual information across an LDP channel is maximized by a subset selection mechanism. 
    Because the mutual information is $\mathrm{Sym}(\mathcal{X})$-invariant and satisfies DPI and DSA (Remark~\ref{rmk:info_meas}), its maximization follows immediately out of Corollary~\ref{cor:PUT_transitive} as a direct consequence.
\end{remark}

\section{Conclusion and Future Directions}\label{sec:conc}

In this work, we introduced a general scenario for privacy-preserving statistical decision-making and characterized the optimal PUT and the corresponding optimal LDP channels in a unified manner.
These results were derived by jointly considering the functional properties of optimal risks (DPI, DS-QCVX/DSA, CCV, and group invariance) alongside the intrinsic geometry of the LDP constraint space.
In Appendix~\ref{sec:app}, we apply this theoretical framework to the specific tasks of privacy-preserving hypothesis testing and parametric estimation, yielding exact analytic characterizations of the optimal PUTs and the corresponding optimal LDP channels that were previously unaddressed.

We conclude by introducing several potential directions for future research:
\begin{itemize}
    \item \textbf{Distributed Setting:} Privacy-preserving statistical decision-making scenario introduced in Section~\ref{sec:stat_dec} could naturally be extended to distributed scenarios involving $n$ data providers, where the input alphabet is defined as $\mathcal{X} = \prod_{i=1}^n \mathcal{D}_i$ and the joint channel is a product of LDP channels, $Q = \bigotimes_{i=1}^n Q^{(i)}$ with $Q^{(i)} \in \mathcal{Q}_{\mathcal{D}_i,\epsilon}$.
    While the optimal risk in this setting retains the DPI, it is not immediately clear whether properties like DS-QCVX or $G$-invariance are preserved.
    However, given that SS is optimal in both the one-shot setting (Remark~\ref{rmk:SS}) and certain asymptotic distributed tasks \cite{yeOptimalSchemesDiscrete2018, namQuantumAdvantagePrivate2025}, identifying the specific conditions under which these functional properties are preserved in distributed settings presents an intriguing direction.
    
    \item \textbf{Global DP:} The geometric analysis we applied to LDP might also offer insights into global DP.
    The $\epsilon$-DP constraint can similarly be characterized by a polyhedral cone defined by $\bigoplus_{z,z': z \leftrightarrow z'} (T_{\mathcal{X},\epsilon})_{(z,z'),x}$, where $\leftrightarrow$ denotes the neighboring relation (cf. Lemma~\ref{lem:LDP_cone}).
    Following the approach we used for LDP in Appendix~\ref{sec:pf_extremal}, characterizing the extreme directions of this cone could potentially yield optimality results for the global DP setup.

    \item \textbf{Approximate LDP \& Infinite Alphabets:} Approximate LDP, $(\epsilon,\delta)$-LDP, is a generalization of $\epsilon$-LDP defined by the condition $\sum_{y \in B} (e^\epsilon Q_{y,x} - Q_{y,x'}) \geq -\delta$ for all $x,x' \in \mathcal{X}$ and subsets $B \subset \mathcal{Y}$.
    In the $\epsilon$-LDP setup (which is equivalent to $(\epsilon,0)$-LDP), our geometric approach exploits the fact that the constraints can be decoupled in a row-wise manner.
    However, because $(\epsilon,\delta)$-LDP constraints couple probabilities across the entire channel matrix, our current row-wise approach cannot be directly applied.
    It would be interesting to explore whether a natural matrix-level extension of our approach exists for the $(\epsilon,\delta)$-LDP setup.
    Finally, generalizing the overall framework from the finite alphabets considered in this work to accommodate arbitrary infinite alphabets remains a promising direction.
\end{itemize}

\medskip

{
\small

\bibliographystyle{unsrtnat}
\bibliography{refs}
}

\newpage
\appendix

\section{Convex Geometry}\label{sec:conv_geo}

We characterize maximal LDP channels using convex-geometric approaches.
For a comprehensive treatment of the concepts in this subsection, we refer the reader to standard texts \cite{bertsimasIntroductionLinearOptimization1997,zieglerLecturesPolytopes1995, rockafellarConvexAnalysis1997}.
While the following concepts can be described in a more general manner, we introduce them in a concise way that is sufficient for our purposes.

\begin{definition}[Polyhedral cone]
    Let $T = \bigoplus_{y \in \mathcal{Y}} t_y^\top \in \mathbb{R}^{\mathcal{Y} \times \mathcal{X}}$ be a matrix constructed from vectors $t_y \in \mathbb{R}^\mathcal{X}$.
    The set $\mathcal{K}(T)$ defined by
    \begin{equation}
    \mathcal{K}(T) := \left\{v \in \mathbb{R}^\mathcal{X} \mid Tv \geq 0\right\}
    \end{equation}
    is called a polyhedral cone.
    We say a polyhedral cone $\mathcal{K}$ is
    \begin{itemize}
        \item trivial if $\mathcal{K} = \{0\}$.
        \item pointed if $v \in \mathcal{K}$ and $-v \in \mathcal{K}$ implies $v = 0$.
    \end{itemize}
\end{definition}
Note that every polyhedral cone is a convex set.

\begin{definition}[Extreme ray \& direction]\label{def:ext_ray}
    Let $v \in \mathbb{R}^\mathcal{X} \setminus \{0\}$.
    The set $\mathcal{R}(v) := \{c v \mid c \in \mathbb{R}_+\}$ is called a ray generated by $v$.
    A ray $\mathcal{R}(v)$ contained in a cone $\mathcal{K}$ is called an extreme ray of $\mathcal{K}$ if
    \begin{equation}
        \forall u,w \in \mathcal{K}, \quad u + w \in \mathcal{R}(v) \Rightarrow u,w \in \mathcal{R}(v).
    \end{equation}
    A vector $v \in \mathcal{K}$ is called an extreme direction of $\mathcal{K}$ if it generates an extreme ray of $\mathcal{K}$.
\end{definition}

\begin{proposition}[{\cite[Cor.~2.6]{bertsimasIntroductionLinearOptimization1997}}]\label{prop:carath_cone}
    A pointed nontrivial polyhedral cone has a finite number of distinct extreme rays.
    Every vector in such a cone is a finite conic combination of its extreme directions.
\end{proposition}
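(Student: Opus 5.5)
This is the standard finite-basis (Minkowski) theorem for pointed polyhedral cones. I would prove it via the correspondence between extreme rays and minimal faces cut out by tight constraints.

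Let $\mathcal{K}=\mathcal{K}(T)$ be pointed and nontrivial, with $T=\bigoplus_{y\in\mathcal{Y}} t_y^\top$.

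\textbf{Step 1 (pointedness gives full rank).} Suppose $\mathrm{span}\{t_y\}\neq\mathbb{R}^\mathcal{X}$. Then some $v\neq 0$ satisfies $Tv=0$, so both $v$ and $-v$ lie in $\mathcal{K}$, contradicting pointedness. Hence $\mathrm{rank}\,T=|\mathcal{X}|$.

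\textbf{Step 2 (extreme rays are one-dimensional solution sets).} For $v\in\mathcal{K}$, let $I(v):=\{y: t_y^\top v=0\}$ be its set of active constraints. I claim that $v\neq 0$ is an extreme direction if and only if $\{t_y\}_{y\in I(v)}$ has rank $|\mathcal{X}|-1$.

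For the "if" direction, suppose $u+w=cv$ with $u,w\in\mathcal{K}$ and $c\ge 0$. For each $y\in I(v)$, nonnegativity of $t_y^\top u$ and $t_y^\top w$ forces both to vanish. So $u$ and $w$ lie in the one-dimensional null space of the active rows, i.e.\ in the line through $v$. Write $u=av$. If $a<0$, then $u$ is a nonzero element of $\mathcal{K}$ with $-u\in\mathcal{R}(v)\subseteq\mathcal{K}$, contradicting pointedness. Hence $a\ge 0$, and similarly for $w$.

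For the "only if" direction, suppose the rank is smaller. Then there is $d\notin\mathbb{R}v$ with $t_y^\top d=0$ for all $y\in I(v)$. For small $\delta>0$ we have $v\pm\delta d\in\mathcal{K}$, because the inactive constraints are strictly positive at $v$. Their sum is $2v$, yet neither lies in $\mathcal{R}(v)$, so $v$ is not extreme.

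\textbf{Finiteness.} Each extreme ray is determined by its active set $I(v)$ as the appropriate half of the kernel line of those rows. There are at most $2^{|\mathcal{Y}|}$ possible active sets, so $\mathcal{K}$ has finitely many extreme rays.

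\textbf{Step 3 (conic decomposition).} Normalize using $s:=\sum_y t_y$. For any $0\neq v\in\mathcal{K}$ we have $s^\top v>0$: each term $t_y^\top v$ is nonnegative, and by Step 1 some term is nonzero. So $P:=\{v\in\mathcal{K}: s^\top v=1\}$ is a polyhedron, and it is bounded. Indeed, if it contained a ray $v+td$ with $d\neq 0$, then $d\in\mathcal{K}$ with $s^\top d=0$, contradicting the positivity just shown.

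Since $P$ is a polytope, every point of $P$ is a convex combination of its vertices; this is the standard Krein–Milman/basic-feasible-solution argument in finite dimension. A vertex of $P$ has $|\mathcal{X}|$ linearly independent active constraints, one of which is $s^\top v=1$. So it has rank $|\mathcal{X}|-1$ among the $t_y$, and by Step 2 it is an extreme direction of $\mathcal{K}$. Rescaling then gives every $v\in\mathcal{K}$ as a conic combination of finitely many extreme directions.

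\textbf{Main obstacle.} The delicate point is the vertex characterization in Step 3: proving that every point of the bounded polyhedron $P$ is a convex combination of its vertices. I would do this by induction on the dimension of the minimal face containing the point. Move along a direction that is feasible for the active constraints until a new constraint becomes tight, which exists because $P$ is bounded. This writes the point as a convex combination of two points lying in lower-dimensional faces, and the induction hypothesis applies to each.
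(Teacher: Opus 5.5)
Your proof is correct, but it takes a different route from the paper, which gives no proof and simply imports the result from Bertsimas--Tsitsiklis. In that textbook the argument goes as follows:
\begin{itemize}
\item extreme rays are defined algebraically, as nonzero elements of the cone at which $|\mathcal{X}|-1$ linearly independent constraints are active;
\item finiteness follows immediately by counting active sets;
\item the conic representation is a special case of the resolution theorem, which is proved by LP duality (a Farkas-type argument). A point outside the cone generated by the extreme rays yields a separating functional. This contradicts the fact that a linear objective is unbounded below on a pointed cone only if it is negative on some extreme ray.
\end{itemize}

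Your route is primal and constructive instead. You slice the cone by $s^\top v=1$ with $s=\sum_y t_y$, which is a clean choice because pointedness forces $Tv\neq 0$ for $v\neq 0$. You then reduce to the vertex representation of a polytope and identify the vertices of the slice with extreme directions.

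What this buys here is that your Step 2 proves the equivalence between the paper's geometric Definition~\ref{def:ext_ray} and the rank condition, i.e.\ Proposition~\ref{prop:ex_ray_rank}, which the paper also only cites. Since the cited source uses the algebraic definition, this bridge is exactly what is needed to apply the citation to the paper's definitions. Your argument delivers both propositions at once and avoids duality.

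The cost is that you must supply the polytope vertex theorem yourself. Your induction is the standard argument and works: you move along a direction in the kernel of the active constraints until a new constraint becomes tight, and this strictly increases the rank of the active set.

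One small tightening: in the boundedness step you use that an unbounded polyhedron contains a ray. This is true but avoidable. Since $s^\top v$ is continuous and positive on the nonempty compact set $\mathcal{K}\cap\{\|v\|=1\}$, there is some $\mu>0$ with $s^\top v\ge\mu\|v\|$ on $\mathcal{K}$. Hence $P$ lies in the ball of radius $1/\mu$.
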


\begin{definition}[Active constraint]
    Let $\mathcal{K}(T)$ be a polyhedral cone defined by $T = \bigoplus_{y \in \mathcal{Y}} t_y^\top \in \mathbb{R}^{\mathcal{Y} \times \mathcal{X}}$, and let $v \in \mathcal{K}(T)$.
    A constraint $t_y$ is active for $v$ if $\langle t_y, v \rangle = 0$.
    Defining the set of active indices as $\mathcal{J}(v) := \{y \in \mathcal{Y} \mid \langle t_y, v \rangle = 0\}$, the active constraint matrix $T^{(v)}$ is the submatrix of $T$ given by
    \begin{equation}
        T^{(v)} := \bigoplus_{y \in \mathcal{J}(v)} t_y^\top.
    \end{equation}
\end{definition}

\begin{proposition}[{\cite[Prop.~2.16]{bertsimasIntroductionLinearOptimization1997}}]\label{prop:ex_ray_rank}
    Let a pointed nontrivial polyhedral cone $\mathcal{K}(T) \subset \mathbb{R}^\mathcal{X}$ be given.
    A non-zero vector $v \in \mathcal{K}(T)$ is an extreme direction of $\mathcal{K}(T)$ if and only if
    \begin{equation}
        \ker(T^{(v)}) = \mathrm{span}(\{v\}).
    \end{equation}
\end{proposition}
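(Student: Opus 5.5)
We prove the two implications separately, working directly from Definition~\ref{def:ext_ray} and the definition of active constraints. The idea is that active constraints at $v$ stay tight under any decomposition of $v$, while inactive constraints at $v$ have slack that allows small perturbations.

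\emph{($\Leftarrow$)} Assume $\ker(T^{(v)}) = \mathrm{span}(\{v\})$, and let $u, w \in \mathcal{K}(T)$ satisfy $u + w = c v$ for some $c \ge 0$. For each active index $y \in \mathcal{J}(v)$ we have $\langle t_y, u\rangle \ge 0$ and $\langle t_y, w\rangle \ge 0$, while
\begin{equation}
\langle t_y, u\rangle + \langle t_y, w\rangle = c\langle t_y, v\rangle = 0 .
\end{equation}
Hence both inner products vanish, so $u, w \in \ker(T^{(v)})$. We can therefore write $u = \alpha v$ and $w = \beta v$ for some real $\alpha, \beta$.

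It remains to show $\alpha, \beta \ge 0$, and this is where pointedness enters. Suppose $\alpha < 0$. Then $u/|\alpha| = -v \in \mathcal{K}(T)$, since $\mathcal{K}(T)$ is a cone. Together with $v \in \mathcal{K}(T)$, pointedness forces $v = 0$, contradicting $v \neq 0$. The same argument gives $\beta \ge 0$. Thus $u, w \in \mathcal{R}(v)$, and since $v \in \mathcal{K}(T)$ is non-zero, $v$ is an extreme direction.

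\emph{($\Rightarrow$)} Let $v$ be an extreme direction. Clearly $v \in \ker(T^{(v)})$ by the definition of $\mathcal{J}(v)$, so $\mathrm{span}(\{v\}) \subseteq \ker(T^{(v)})$.

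For the reverse inclusion, suppose for contradiction that some $d \in \ker(T^{(v)})$ satisfies $d \notin \mathrm{span}(\{v\})$. Every inactive index $y \notin \mathcal{J}(v)$ satisfies $\langle t_y, v\rangle > 0$, and there are only finitely many such $y$. So we can choose $\delta > 0$ small enough that $\langle t_y, v \pm \delta d\rangle \ge 0$ for all inactive $y$. For active $y$ we have $\langle t_y, v \pm \delta d\rangle = 0$. Hence $v \pm \delta d \in \mathcal{K}(T)$.

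Their sum is $2v \in \mathcal{R}(v)$, so extremality gives $v + \delta d = \gamma v$ for some $\gamma \ge 0$. This yields $d = \frac{\gamma - 1}{\delta} v \in \mathrm{span}(\{v\})$, a contradiction.

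The only delicate points are the role of pointedness, which is used only in ($\Leftarrow$) to exclude negative multiples of $v$, and the finiteness of the constraint set, which guarantees a uniform $\delta$ in ($\Rightarrow$). Nontriviality of the cone is only needed so that non-zero $v$ exist; neither direction is a serious obstacle.
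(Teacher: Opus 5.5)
Your proof is correct. The paper itself gives no proof of this statement: it imports it by citation from \cite{bertsimasIntroductionLinearOptimization1997}, so there is no in-paper argument to compare against.

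Your argument has two halves. In ($\Leftarrow$), active constraints stay tight under any decomposition $u+w=cv$, which puts $u,w$ in $\ker(T^{(v)})=\mathrm{span}(\{v\})$, and pointedness rules out negative multiples. In ($\Rightarrow$), a kernel direction $d\notin\mathrm{span}(\{v\})$ yields feasible perturbations $v\pm\delta d$, because only finitely many inactive constraints need positive slack. This is the conic analogue of the standard proof that extreme points of a polyhedron coincide with basic feasible solutions.

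What your route buys is a self-contained bridge between the paper's decomposition-based Definition~\ref{def:ext_ray} and the algebraic condition. By rank--nullity, for $v\neq 0$ the condition $\ker(T^{(v)})=\mathrm{span}(\{v\})$ says exactly that $T^{(v)}$ has rank $|\mathcal{X}|-1$, which is how extreme rays are often \emph{defined} in LP texts. So your argument does not depend on the cited source's definition matching the paper's. It also makes explicit that pointedness is used only in ($\Leftarrow$) and that nontriviality is inessential. The citation, of course, is shorter.

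One side remark on how the paper uses this in Lemma~\ref{lem:ext_ray_LDP}. There the paper works only with the active rows of $T_{\mathcal{X},\epsilon}$, whereas your statement concerns all active rows of the full defining matrix $T_{\mathcal{X},\epsilon}\oplus I_{\mathcal{X}\times\mathcal{X}}$. This is harmless: every non-zero $v\in\mathcal{K}_{\mathcal{X},\epsilon}$ is strictly positive, so the nonnegativity rows are never active.
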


\section{Proofs}

\subsection{Proof of Proposition~\ref{prop:BM_structure}}\label{sec:pf_BM_structure}

We first show that both $R_{\mathrm{B}}^*$ and $R_{\mathrm{M}}^*$ satisfies the DPI.
Let $Q \in \mathcal{P}(\mathcal{X} \rightarrow \mathcal{Y})$ and $\tilde{Q} \in \mathcal{P}(\mathcal{X} \rightarrow \mathcal{Z})$ be two channels such that $\tilde{Q} = WQ$ holds for some post-processing channel $W \in \mathcal{P}(\mathcal{Y} \rightarrow \mathcal{Z})$.
Then, for any $\theta \in \Theta$ and a decision rule $P_{A|Z}$ for $\tilde{Q}$, we have
\begin{equation}
    R(\theta,\tilde{Q},P_{A|Z}) = R(\theta,Q,P_{A|Z}W),
\end{equation} 
where $P_{A|Z}W \in \mathcal{P}(\mathcal{Y} \rightarrow \mathcal{A})$ is a valid decision rule for $Q$.
Therefore, both the optimal Bayesian risk and the minimax risk achievable by $\tilde{Q}$ are also achievable by $Q$, and thereby $R_{\mathrm{B}}^*(Q;\lambda) \leq R_{\mathrm{B}}^*(\tilde{Q};\lambda)$ and $R_{\mathrm{M}}^*(Q) \leq R_{\mathrm{M}}^*(\tilde{Q})$.

Now, we prove that $R_{\mathrm{B}}^*$ satisfies the DSA.
Let $Q^{(j)} \in \mathcal{P}(\mathcal{X} \rightarrow \mathcal{Y}^{(j)})$, $Q = \oplus_j p_j Q^{(j)} \in \mathcal{P}(\mathcal{X} \rightarrow \sqcup_j \mathcal{Y}^{(j)})$, and $\mathcal{Y} = \sqcup_j \mathcal{Y}^{(j)}$.
Any decision rule $P_{A|Y} \in \mathcal{P}(\mathcal{Y} \rightarrow \mathcal{A})$ can be naturally identified by a collection of $P_{A|Y^{(j)}} \in \mathcal{P}(\mathcal{Y}^{(j)} \rightarrow \mathcal{A})$.
With this identification, we have 
\begin{align}
    R(\theta,Q,P_{A|Y}) &= \sum_j p_j \sum\limits_{x \in \mathcal{X}, y \in \mathcal{Y}^{(j)}}  P_{X|\vartheta}(x|\theta) Q^{(j)}_{y,x} \int_\mathcal{A} l(\theta,a)  P_{A|Y^{(j)}}(da|y)
    \\& = \sum_j p_j R(\theta,Q^{(j)},P_{A|Y^{(j)}}), \label{eq:risk_decomp}
\end{align}
and thereby $R_{\mathrm{B}}(Q,P_{A|Y};\lambda) = \sum_j p_j R_{\mathrm{B}}(Q^{(j)},P_{A|Y^{(j)}};\lambda)$.
Because taking infimum over $P_{A|Y} \in \mathcal{P}(\mathcal{Y} \rightarrow \mathcal{A})$ is equivalent to taking infimum over $P_{A|Y^{(j)}} \in \mathcal{P}(\mathcal{Y}^{(j)}\rightarrow \mathcal{A})$ for each $j$ independently, $R^*_{\mathrm{B}}$ satisfies the DSA.

To show the DS-QCVX of $R_{\mathrm{M}}^*$, let $Q^{(j)} \in \mathcal{P}(\mathcal{X} \rightarrow \mathcal{Y}^{(j)})$, $Q = \oplus_j p_j Q^{(j)} \in \mathcal{P}(\mathcal{X} \rightarrow \sqcup_j \mathcal{Y}^{(j)})$, $\mathcal{Y} = \sqcup_j \mathcal{Y}^{(j)}$, $P_{A|Y^{(j)}}$ be an optimal decision rule that achieves $R_{\mathrm{M}}^*(Q^{(j)})$, and $P^{*}_{A|Y}\in \mathcal{P}(\mathcal{Y} \rightarrow \mathcal{A})$ be a decision rule for $Q$ such that $P^{*}_{A|Y}(\cdot|y) = P_{A|Y^{(j)}}(\cdot|y)$ if $y \in \mathcal{Y}^{(j)}$.\footnote{In general, an optimal decision rule achieving the infimum might not exist. In such cases, the proof follows identically by considering a sequence of $\delta$-approximate optimal decision rules and taking the limit as $\delta \to 0$.}
From \eqref{eq:risk_decomp}, we have 
\begin{equation}
    R(\theta,Q,P^{*}_{A|Y}) = \sum_j p_j R(\theta,Q^{(j)},P_{A|Y^{(j)}}) \leq \max_j R(\theta,Q^{(j)},P_{A|Y^{(j)}}),
\end{equation}
for all $\theta \in \Theta$.
By taking supremum over $\theta \in \Theta$ on the above, we get
\begin{align}
    R_{\mathrm{M}}^*(Q) \leq \sup_{\theta \in \Theta} R(\theta,Q,P^{*}_{A|Y}) \leq \max_j \sup_{\theta \in \Theta} R(\theta,Q^{(j)},P_{A|Y^{(j)}}) = \max_j R_{\mathrm{M}}^*(Q^{(j)}).
\end{align}

Next, we prove the $G$-invariance of $R_{\mathrm{B}}^*$ and $R_{\mathrm{M}}^*$.
As we defined the induced $G$-action on $Q$ in Definition~\ref{def:invariant_ch}, we can similarly define the induced $G$-action on a decision rule $P_{A|Y}$ by 
\begin{equation}
    (g \circ_\sigma P_{A|Y}) (\cdot | y) = P_{A|Y}( g^{-1}(\cdot) \mid \sigma_{g^{-1}} (y)).
\end{equation}
Then, for all $g \in G$, $\theta \in \Theta$, finite set $\mathcal{Y}$, $Q \in \mathcal{P}(\mathcal{X} \rightarrow \mathcal{Y})$, $G$-action $\sigma$ on $\mathcal{Y}$, and $P_{A|Y} \in \mathcal{P}(\mathcal{Y} \rightarrow \mathcal{A})$, we have 
\begin{align}
    R&(\theta,g \circ_\sigma Q , g\circ_\sigma P_{A|Y}) \nonumber
    \\&= \sum\limits_{x \in \mathcal{X}, y \in \mathcal{Y}}  P_{X|\vartheta}(x|\theta) Q_{\sigma_{g^{-1}} (y),g^{-1}x} \int_\mathcal{A} l(\theta,a)  P_{A|Y}(g^{-1}(da)|\sigma_{g^{-1}} (y))
    \\& = \sum\limits_{x \in \mathcal{X}, y \in \mathcal{Y}}  P_{X|\vartheta}(gx|\theta) Q_{y,x} \int_\mathcal{A} l(\theta,ga)  P_{A|Y}(da|y)
    \\& = \sum\limits_{x \in \mathcal{X}, y \in \mathcal{Y}}  P_{X|\vartheta}(x|g^{-1}\theta) Q_{y,x} \int_\mathcal{A} l(g^{-1}\theta,a)  P_{A|Y}(da|y)
    \\& = R(g^{-1}\theta, Q , P_{A|Y}).
\end{align}
By taking supremum over $\theta$ and using the fact that $\theta \mapsto g^{-1}\theta$ is a bijection, we have $R^*_{\mathrm{M}}(g \circ_\sigma Q) = R^*_{\mathrm{M}}(Q)$.
Also, by the $G$-invariance of the prior $\lambda \in \mathcal{P}(\Theta)$, we have $R_{\mathrm{B}}^*(g \circ_\sigma Q;\lambda) = R_{\mathrm{B}}^*(Q;\lambda)$. \hfill $\square$

\subsection{Proof of Theorem~\ref{thm:opt_PUT_red}} \label{sec:pf_opt_PUT_red}

The proof relies on geometric properties of $\mathcal{Q}_{\mathcal{X},\epsilon}$ and $\max(\mathcal{Q}_{\mathcal{X},\epsilon})$ that we will formally establish in the next subsection.
Following the first part of the proof of Proposition~\ref{prop:max_LDP_geo} (around \eqref{eq:pf_prop_b3_1}--\eqref{eq:pf_prop_b3_2}) and Theorem~\ref{thm:extremal_LDP_is_max}, every LDP channel is Blackwell-dominated by a maximal LDP channel.
Therefore, if $R^*$ satisfies the DPI, then it is clear from the definitions that 
\begin{equation}
    \inf\limits_{Q \in \mathcal{Q}_{\mathcal{X},\epsilon}} R^*(Q) = \inf\limits_{Q \in \max(\mathcal{Q}_{\mathcal{X},\epsilon})} R^*(Q).
\end{equation}
Now, we prove \eqref{eq:suff_G_inv_max} by showing that for any given $Q \in \max(\mathcal{Q}_{\mathcal{X},\epsilon})$, there exists a $G$-invariant maximal $\epsilon$-LDP channel $\tilde{Q} \in \max^G(\mathcal{Q}_{\mathcal{X},\epsilon})$ such that $R^*(Q) \geq R^*(\tilde{Q})$.

Let $Q \in \max(\mathcal{Q}_{\mathcal{X},\epsilon})$ with output alphabet $\mathcal{Y}$ and we introduce the trivial $G$-action on $\mathcal{Y}$, i.e., $\sigma_g(y) = y$.
We then construct the symmetrized channel $\tilde{Q} \in \mathcal{P}(\mathcal{X}\rightarrow \tilde{\mathcal{Y}})$, defined by
\begin{equation}
    \tilde{Q} := \bigoplus_{g \in G} \frac{1}{|G|} (g\circ_\sigma Q),
\end{equation}
where $\tilde{\mathcal{Y}} = \bigsqcup_{g \in G} \mathcal{Y} = \mathcal{Y} \times G$.
By construction, the components of $\tilde{Q}$ are given by
\begin{equation}
    \tilde{Q}_{(y,g),x} = (g \circ_\sigma Q)_{y,x}/|G| = Q_{y,g^{-1}x} / |G|.    
\end{equation}
A $G$-action $\tilde{\sigma}$ on $\tilde{\mathcal{Y}}$ is naturally induced from the $G$-action $\sigma$ on $\mathcal{Y}$ by 
\begin{equation}
    \forall h,g \in G,\; y \in \mathcal{Y}, \quad \tilde{\sigma}_h(y,g) := (\sigma_h(y),hg) = (y,hg).
\end{equation}
With respect to this action, $\tilde{Q}$ is $G$-invariant because
\begin{align}
    (h \circ_{\tilde{\sigma}} \tilde{Q})_{(y,g),x} &= \tilde{Q}_{(y,h^{-1}g),h^{-1}x} = \frac{1}{|G|}Q_{y,g^{-1}x} \\& = \frac{1}{|G|}(g\circ_\sigma Q)_{y,x} = \tilde{Q}_{(y,g),x}.
\end{align}
Furthermore, $R^*(Q) \geq R^*(\tilde{Q})$ holds because
\begin{align}
    R^*(\tilde{Q}) = R^*\left(\bigoplus_{g \in G} \frac{1}{|G|} (g\circ_\sigma Q)\right) \leq \max_{g \in G} R^* (g\circ_\sigma Q) = R^* (Q),
\end{align}
where the inequality follows from the DS-QCVX, and the last equality follows from the $G$-invariance of $R^*$.

Finally, it remains to prove that $\tilde{Q} \in \max(\mathcal{Q}_{\mathcal{X},\epsilon})$.
By Proposition~\ref{prop:max_LDP_geo}, every non-zero row of $Q$ corresponds to an extreme direction of the cone $\mathcal{K}_{\mathcal{X},\epsilon}$.
As constructed, each row of $\tilde{Q}$ is a permuted version of a row of $Q$.
Lemma~\ref{lem:ext_ray_LDP} guarantees that permuting an extreme direction of $\mathcal{K}_{\mathcal{X},\epsilon}$ yields another extreme direction.
Therefore, every non-zero row of $\tilde{Q}$ is also an extreme direction, and hence Proposition~\ref{prop:max_LDP_geo} implies $\tilde{Q} \in \max(\mathcal{Q}_{\mathcal{X},\epsilon})$. 
This completes the proof of \eqref{eq:suff_G_inv_max}. \hfill $\square$

\subsection{Proof of Theorem~\ref{thm:extremal_LDP_is_max}}\label{sec:pf_extremal}

In this subsection, we prove Theorem~\ref{thm:extremal_LDP_is_max} using a convex-geometric approach.
For LDP channels $Q \in \mathcal{Q}_{\mathcal{X},\epsilon}$, it is convenient to consider their rows (likelihood vectors) $\{Q_{y,*}\}_{y \in \mathcal{Y}}$ since the LDP constraints act independently on each row.

\begin{definition}
    The $\epsilon$-LDP cone $\mathcal{K}_{\mathcal{X},\epsilon}$ is defined as 
    \begin{equation}
        \mathcal{K}_{\mathcal{X},\epsilon} := \left\{ v \in \mathbb{R}_+^\mathcal{X} \;\middle|\; \forall x,x' \in \mathcal{X}, \; e^\epsilon v_x - v_{x'} \geq 0 \right\}.
    \end{equation}
\end{definition}

The following lemma can be checked easily and we omit the proof.
\begin{lemma}\label{lem:LDP_cone}
    Let $Q \in \mathcal{P}(\mathcal{X} \rightarrow \mathcal{Y})$.
    We have that 
    \begin{equation}
        Q \in \mathcal{Q}_{\mathcal{X},\epsilon} \Leftrightarrow \forall y \in \mathcal{Y},\; (Q_{y,*})^\top \in \mathcal{K}_{\mathcal{X},\epsilon}.
    \end{equation}
    Furthermore, $\mathcal{K}_{\mathcal{X},\epsilon}$ is a nontrivial, pointed, polyhedral cone defined by a matrix $T_{\mathcal{X},\epsilon} \oplus I_{\mathcal{X}\times \mathcal{X}}$, where $T_{\mathcal{X},\epsilon} \in \mathbb{R}^{\mathcal{I}_{\mathcal{X}} \times \mathcal{X}}$,
    \begin{equation}\label{eq:LDP_const_mat}
        (T_{\mathcal{X},\epsilon})_{(z,z'), x} := 
        \begin{cases} 
            e^\epsilon & \text{if } x = z \\
            -1 & \text{if } x = z' \\
            0 & \text{otherwise}
        \end{cases},
    \end{equation}
    and $\mathcal{I}_{\mathcal{X}} := \{(z, z') \in \mathcal{X} \times \mathcal{X} \mid z \neq z'\}$.
\end{lemma}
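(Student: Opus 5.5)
Both claims of Lemma~\ref{lem:LDP_cone} come straight from unpacking the definitions, so I would verify them in order.

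\textbf{The equivalence.} The LDP condition $e^\epsilon Q_{*,x}-Q_{*,x'}\ge 0$ is a componentwise vector inequality, so it is the same as requiring $e^\epsilon Q_{y,x}-Q_{y,x'}\ge 0$ for every $y\in\mathcal{Y}$ and every pair $x,x'$. Fixing $y$, this is exactly the defining inequality of $\mathcal{K}_{\mathcal{X},\epsilon}$ applied to the row vector $(Q_{y,*})^\top$. The remaining requirement $Q_{y,*}\ge 0$ holds automatically because $Q$ is a stochastic matrix. Reading the argument backwards gives the converse. The case $x=x'$ imposes nothing, since $(e^\epsilon-1)v_x\ge 0$ already follows from $v\ge 0$.

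\textbf{Polyhedral description.} For $v\in\mathbb{R}^{\mathcal{X}}$, the row of $T_{\mathcal{X},\epsilon}$ indexed by $(z,z')$ has inner product $e^\epsilon v_z-v_{z'}$ with $v$. The rows of $I_{\mathcal{X}\times\mathcal{X}}$ give $v_x$. Hence the condition $(T_{\mathcal{X},\epsilon}\oplus I)v\ge 0$ is precisely the definition of $\mathcal{K}_{\mathcal{X},\epsilon}$, with the diagonal pairs dropped as noted above. By definition this makes $\mathcal{K}_{\mathcal{X},\epsilon}=\mathcal{K}(T_{\mathcal{X},\epsilon}\oplus I)$ a polyhedral cone.

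\textbf{Nontrivial and pointed.} The all-ones vector $\mathbf{1}_\mathcal{X}$ satisfies $e^\epsilon-1\ge 0$, so it lies in the cone and the cone is nontrivial. For pointedness, if both $v$ and $-v$ are in the cone, then $v\ge 0$ and $-v\ge 0$, which forces $v=0$. The identity block is what makes this work.

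No step here is a real obstacle. The only points needing care are the redundant constraints when $x=x'$ and the fact that nonnegativity comes from stochasticity rather than from the LDP inequalities.
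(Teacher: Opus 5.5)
Your proof is correct and is exactly the routine check the paper intends; the paper omits the proof as ``easily checked.'' Your unpacking of the definitions supplies it in full: you read the LDP inequality row by row, note that nonnegativity comes from stochasticity and that the diagonal constraints $x=x'$ are redundant, exhibit $\mathbf{1}_{\mathcal{X}}$ for nontriviality, and use the identity block for pointedness.
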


Now, we first show that maximality of an LDP channel can be equivalently described by the extremality of each non-zero row.

\begin{proposition}\label{prop:max_LDP_geo}
    Let $Q \in \mathcal{Q}_{\mathcal{X},\epsilon}$ be a channel with output alphabet $\mathcal{Y}$.
    Then, $Q \in \max(\mathcal{Q}_{\mathcal{X},\epsilon})$ if and only if every non-zero row of $Q$ is an extreme direction of $\mathcal{K}_{\mathcal{X},\epsilon}$.
\end{proposition}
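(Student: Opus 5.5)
We prove both directions by decomposing rows along extreme rays of $\mathcal{K}_{\mathcal{X},\epsilon}$ and by using the fact that the column sums of any channel equal $\mathbf{1}_\mathcal{X}^\top$.

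\emph{(Only if.)} Suppose some non-zero row $q := (Q_{y_0,*})^\top$ is not an extreme direction. By Lemma~\ref{lem:LDP_cone} and Proposition~\ref{prop:carath_cone}, every row can be written as a finite conic combination of extreme directions of $\mathcal{K}_{\mathcal{X},\epsilon}$: $Q_{y,*} = \sum_k a_{y,k} v_k^\top$ with $a_{y,k}\ge 0$. We split each row accordingly into the rows $a_{y,k} v_k^\top$ and obtain a refined channel $\hat Q$. It is $\epsilon$-LDP because each new row lies in the cone, and its columns still sum to one. Merging the split rows back is a deterministic post-processing, so $Q \precsim \hat Q$.

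We then argue $\hat Q \not\precsim Q$. Suppose instead $\hat Q = WQ$. Each row of $\hat Q$ is then a nonnegative combination of rows of $Q$. Since $Q = M\hat Q$ for the merging matrix $M$, we get $Q = MWQ$, where $MW$ is column-stochastic. I would then use a counting/extremality argument. Consider the finitely many extreme rays, and for each ray the total mass $\sum_y Q_{y,*}$ restricted to rows lying on that ray. The relation $\hat Q = WQ$ expresses each extreme-direction row of $\hat Q$ as a combination of rows of $Q$. Because a sum of cone vectors lying on an extreme ray forces each summand onto that ray, each row $a v_k^\top$ of $\hat Q$ must be built only from rows of $Q$ already proportional to $v_k$. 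Summing over all rows of $\hat Q$ on ray $k$ and using that $W$ is column-stochastic, the total mass of $\hat Q$ on ray $k$ is at most the total mass of $Q$ on ray $k$. Summing over $k$, the total mass of $\hat Q$ is $\sum_x 1 = |\mathcal{X}|$ under a fixed linear functional such as $\mathbf{1}^\top$. This total is at most the mass of the rows of $Q$ lying on extreme rays, which is strictly less than $|\mathcal{X}|$ because row $q$ is non-zero and off every extreme ray. That is a contradiction, so $Q$ is not maximal.

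\emph{(If.)} Suppose every non-zero row of $Q$ is extreme, and let $Q \precsim B$ with $B$ LDP, say $Q = WB$. Refine $B$ into $\hat B$ as above, so that $B \sim$ something dominated by $\hat B$ and still $Q = \hat W \hat B$ for some channel $\hat W$. Each row of $Q$ is a conic combination of rows of $\hat B$. By extremality, every row of $\hat B$ used with positive weight in row $y$ of $Q$ is proportional to $Q_{y,*}$. Applying the mass-counting argument of the first part, all of $\hat B$'s mass must be used: summing the column sums gives equality, so no row of $\hat B$ is "lost". Hence every row of $\hat B$ is proportional to some row of $Q$. Then $\hat B \precsim Q$: send output $y'$ of $\hat B$ to the unique (up to identical rows) $y$ with $\hat B_{y',*} \propto Q_{y,*}$, splitting it in proportion. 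Equivalently, define the reverse post-processing $V_{y',y} = \hat B_{y',x}/Q_{y,x}$ restricted to proportional pairs. Normalize using the ratio constants so that $V$ is column-stochastic, which follows because $\sum_{y'} \hat W_{y,y'}\hat B_{y',*} = Q_{y,*}$ and the ratios are constant. Combining with $B \precsim \hat B$ (merging) gives $B \precsim Q$, so $Q$ is maximal.

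The main obstacle is the mass-counting step, namely showing that rows of $\hat B$ on rays not occurring in $Q$ must carry zero mass. I would formalize it via the linear functional $\mathbf{1}^\top v$ summed over rows. Column-stochasticity gives $\sum_{y'}\mathbf{1}^\top \hat B_{y',*} = |\mathcal{X}| = \sum_y \mathbf{1}^\top Q_{y,*}$. Since $Q = \hat W\hat B$ with $\hat W$ column-stochastic, every row $y'$ of $\hat B$ distributes its full mass to rows of $Q$, and extremality of the rows of $Q$ forces that row to be proportional to each $Q_{y,*}$ receiving positive weight. This also handles the edge case of identical rows.
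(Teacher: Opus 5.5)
Your proof is correct and follows essentially the same route as the paper. In the ``only if'' direction you split rows along extreme rays and use column-stochasticity of the reverse post-processing to show that a non-extreme row cannot be absorbed; your total-mass count is a quantitative version of the paper's observation that the column $W_{*,y^*}$ would have to vanish. In the ``if'' direction you show that every row of the dominating channel lies on the ray of some row of $Q$, with per-ray mass conserved, which is what the paper packages as the gathered channel $P$ together with Lemma~\ref{lem:equiv_ch}.

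One step needs tightening: the explicit reverse channel. Rows of $Q$ on a common ray are only proportional, not identical, so $V_{y',y}=\hat B_{y',x}/Q_{y,x}$ over all proportional pairs overcounts. Instead, write $\hat B_{y',*}=\beta_{y'}v_j^\top$ and $Q_{y,*}=\alpha_y v_j^\top$, and set $V_{y',y}=\beta_{y'}/\pi_j$ with $\pi_j=\sum_{y:\,Q_{y,*}\in\mathcal{R}(v_j)}\alpha_y$, sending zero rows of $Q$ anywhere. Column-stochasticity of this $V$ is exactly the per-ray identity $\pi_j=\sum_{y':\,\hat B_{y',*}\in\mathcal{R}(v_j)}\beta_{y'}$. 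You get it by summing $Q_{y,*}=\sum_{y'}\hat W_{y,y'}\hat B_{y',*}$ over the rows of $Q$ on ray $j$ and using $\sum_y\hat W_{y,y'}=1$. Alternatively, gather $Q$ ray by ray first, as the paper does.
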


Before proving the proposition, we state the following lemma that is used in the proof.

\begin{lemma}\label{lem:equiv_ch}
    The following equivalences hold:
    \begin{enumerate}
        \item Relabeling: Let $Q \in \mathcal{P}(\mathcal{X} \rightarrow \mathcal{Y}_1)$, $\phi:\mathcal{Y}_1 \rightarrow \mathcal{Y}_2$ be a bijection, and $Q' \in \mathcal{P}(\mathcal{X} \rightarrow \mathcal{Y}_2)$ be the induced channel defined by $Q'_{y,x} = Q_{\phi^{-1}(y),x}$.
        Then, $Q \sim Q'$.
        
        \item Zero-padding: Let $Q \in \mathcal{P}(\mathcal{X} \rightarrow \mathcal{Y}_1)$ and let $\mathcal{Y}_2$ be a finite set.
        Then, $Q \oplus \mathbf{0}_{\mathcal{Y}_2,\mathcal{X}} \sim Q$.

        \item Row-splitting: Let $Q \in \mathcal{P}(\mathcal{X} \rightarrow \mathcal{Y})$.
        For any given $y_0 \in \mathcal{Y}$ and $\lambda \in [0,1]$, we have
        \begin{equation}
            Q \sim (\oplus_{y \in \mathcal{Y}\setminus\{y_0\}} Q_{y,*}) \oplus (\lambda Q_{y_0}) \oplus ((1-\lambda) Q_{y_0}).
        \end{equation}
    \end{enumerate}
\end{lemma}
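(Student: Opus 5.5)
For each of the three items I will exhibit explicit post-processing channels $W$ in both directions, since $Q \sim Q'$ means exactly $Q' = WQ$ and $Q = W'Q'$ for some channels $W, W'$. Each $W$ will be a $0/1$ or stochastic matrix whose columns visibly sum to one, so checking that it is a channel is immediate.

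\emph{Relabeling.} Take $W \in \mathcal{P}(\mathcal{Y}_1 \to \mathcal{Y}_2)$ to be the permutation matrix $W_{y',y} = \mathbbm{1}\{y' = \phi(y)\}$. Then
\[
(WQ)_{y',x} = Q_{\phi^{-1}(y'),x} = Q'_{y',x}.
\]
The inverse permutation matrix, built from $\phi^{-1}$, recovers $Q$ from $Q'$.

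\emph{Zero-padding.} Take $W \in \mathcal{P}(\mathcal{Y}_1 \to \mathcal{Y}_1 \sqcup \mathcal{Y}_2)$ to be the inclusion map, i.e. $W = I \oplus \mathbf{0}$. This is deterministic, and $WQ = Q \oplus \mathbf{0}_{\mathcal{Y}_2,\mathcal{X}}$. For the reverse direction, use a deterministic map on $\mathcal{Y}_1 \sqcup \mathcal{Y}_2$ that is the identity on $\mathcal{Y}_1$ and sends every element of $\mathcal{Y}_2$ to some fixed $y_1 \in \mathcal{Y}_1$. Applied to $Q \oplus \mathbf{0}$, this adds only the zero rows to row $y_1$, so it returns $Q$. (We may assume $\mathcal{Y}_1 \neq \emptyset$, since $Q$ is a channel.)

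\emph{Row-splitting.} Write $\mathcal{Y}' = (\mathcal{Y}\setminus\{y_0\}) \sqcup \{y_0^{a}, y_0^{b}\}$ for the output alphabet of the split channel $Q'$. Merging is the deterministic map fixing $\mathcal{Y}\setminus\{y_0\}$ and sending both $y_0^{a}$ and $y_0^{b}$ to $y_0$; it gives $Q = W'Q'$ because $\lambda Q_{y_0,*} + (1-\lambda)Q_{y_0,*} = Q_{y_0,*}$. Splitting is the channel $W$ that fixes every $y \neq y_0$ and sends $y_0$ to $y_0^{a}$ with probability $\lambda$ and to $y_0^{b}$ with probability $1-\lambda$. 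Its column for $y_0$ is $(\lambda, 1-\lambda)$ on the two new symbols, so it is stochastic, and $WQ = Q'$.

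None of these steps is a real obstacle; the only care needed is bookkeeping of the disjoint-union indexing and the orientation of $W$. Here $W$ is indexed as (new output) $\times$ (old output), so that $Q' = WQ$ matches the Blackwell order definition.
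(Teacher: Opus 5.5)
Your proposal is correct. The paper omits this proof as easy to check, and exhibiting explicit post-processing channels in both directions (permutation matrices, inclusion/collapse maps, and split/merge maps) is exactly the routine verification intended, with your orientation of $W$ consistent with the paper's definition of the Blackwell order.
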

We omit the proof of the above lemma since it is easy to check.
Now, we prove Proposition~\ref{prop:max_LDP_geo}.

\begin{proof}[Proof of Proposition~\ref{prop:max_LDP_geo}]
    ($\Rightarrow$)
    From Proposition~\ref{prop:carath_cone}, there are a finite number of extreme rays of $\mathcal{K}_{\mathcal{X},\epsilon}$, say $\mathcal{R}(v_1), \dots, \mathcal{R}(v_l)$.
    For every $y$, there exist non-negative constants $c_{y,1}, \dots, c_{y,l}$ such that
    \begin{equation}\label{eq:pf_prop_b3_1}
        Q_{y,*} = \sum_{j \in [l]} c_{y,j} (v_j)^\top.    
    \end{equation}
    Now, we construct a channel $\tilde{Q} \in \mathcal{P}(\mathcal{X} \rightarrow \tilde{\mathcal{Y}})$, $\tilde{\mathcal{Y}} = \mathcal{Y} \times [l]$, by splitting each summand of each row of $Q$ into distinct rows, i.e.,
    \begin{equation}
        \tilde{Q} = \bigoplus_{(y,j) \in \mathcal{Y} \times [l]} c_{y,j} (v_j)^\top.
    \end{equation}
    Then, it is easy to check that $\tilde{Q} \in \mathcal{Q}_{\mathcal{X},\epsilon}$, and $Q = V\tilde{Q}$ holds for the column-stochastic matrix $V \in \mathcal{P}(\tilde{\mathcal{Y}} \rightarrow \mathcal{Y})$ given by
    \begin{equation}\label{eq:pf_prop_b3_2}
        V_{y',(y,j)}=\mathbbm{1}(y=y').
    \end{equation}
    Because $\tilde{Q} \succsim Q$ and $Q \in \max(Q_{\mathcal{X},\epsilon})$, $\tilde{Q} \precsim Q$ should hold, i.e., there exists a column-stochastic matrix $W$ such that $\tilde{Q} = W Q$.

    Now, for contradiction, assume that there exists a non-zero row of $Q$ which is not an extreme direction, say $Q_{y^*,*}$.
    Then, $Q_{y^*,*}$ is a nontrivial conic combination of distinct extreme directions.
    Because $\tilde{Q} = WQ$, $W_{\tilde{y},y^*} = 0$ should hold for all $\tilde{y}$ (otherwise, a row of $\tilde{Q}$ is also a nontrivial conic combination of distinct extreme directions).
    This implies that a column $W_{*,y^*}$ is all-zero column, which contradicts the column-stochasticity of $W$.
    By contradiction, we can conclude that every non-zero row of $Q$ is an extreme direction.

    ($\Leftarrow$)
    Let $\{\mathcal{R}(v_1), \dots, \mathcal{R}(v_l)\}$ be the set of all distinct extreme rays of $\mathcal{K}_{\mathcal{X},\epsilon}$.
    Let $Q \in \mathcal{Q}_{\mathcal{X},\epsilon}$ be a channel with output alphabet $\mathcal{Y}$ whose non-zero rows are extreme directions.
    We define the gathered channel $P \in \mathcal{P}(\mathcal{X} \rightarrow [l])$ by pooling the rows of $Q$ along each ray:
    \begin{equation}
        P_{j,*} := \sum_{y: Q_{y,*} \in \mathcal{R}(v_j)} Q_{y,*}.    
    \end{equation}
    Because $Q$ can be perfectly reconstructed from $P$ by proportionally dividing each row $P_{j,*}$ and appending any all-zero rows, repeated applications of the row splitting and zero-padding operations in Lemma \ref{lem:equiv_ch} imply $Q \sim P$.
    
    Now, suppose there exists $\tilde{Q} \in \mathcal{Q}_{\mathcal{X},\epsilon}$ with output alphabet $\tilde{\mathcal{Y}}$ such that $Q \precsim \tilde{Q}$, meaning $Q = W\tilde{Q}$ for some column-stochastic matrix $W \in \mathcal{P}(\tilde{\mathcal{Y}} \rightarrow \mathcal{Y})$.
    Because the rows of $Q$ are extreme directions, the relation $Q = W\tilde{Q}$ implies that every row of $\tilde{Q}$ is also an extreme direction, and $W_{y,\tilde{y}}>0$ only if both $Q_{y,*}$ and $\tilde{Q}_{\tilde{y},*}$ are in the same extreme ray.
    Furthermore, because $W$ is column-stochastic, the total probability mass along each extreme ray is conserved between $Q$ and $\tilde{Q}$, i.e.,
    \begin{align}
        P_{j,*} &= \sum_{y: Q_{y,*} \in \mathcal{R}(v_j)} Q_{y,*}= \sum_{y: Q_{y,*} \in \mathcal{R}(v_j)} \sum_{\tilde{y} \in \tilde{\mathcal{Y}}} W_{y,\tilde{y}} \tilde{Q}_{\tilde{y},*}
        \\&= \sum_{\tilde{y}: \tilde{Q}_{\tilde{y},*} \in \mathcal{R}(v_j)} \left( \sum_{y \in \mathcal{Y}} W_{y,\tilde{y}} \right) \tilde{Q}_{\tilde{y},*} = \sum_{\tilde{y}: \tilde{Q}_{\tilde{y},*} \in \mathcal{R}(v_j)} \tilde{Q}_{\tilde{y},*}.
    \end{align}
    Consequently, gathering the rows of $\tilde{Q}$ along each ray yields the exact same gathered channel $P$.
    By Lemma \ref{lem:equiv_ch}, this guarantees $\tilde{Q} \sim P$.
    Since $Q \sim P \sim \tilde{Q}$, we can conclude that $Q \in \max(\mathcal{Q}_{\mathcal{X},\epsilon})$.
\end{proof}

As a next step, we show that there is a one-to-one correspondence between the set of all extreme rays of $\mathcal{K}_{\mathcal{X},\epsilon}$ and the set of all non-empty proper subsets of $\mathcal{X}$.

\begin{lemma}\label{lem:ext_ray_LDP}
    A vector $v \in \mathcal{K}_{\mathcal{X},\epsilon}$ is an extreme direction of $\mathcal{K}_{\mathcal{X},\epsilon}$ if and only if there exists a non-empty proper subset $\mathcal{Z} \subsetneq \mathcal{X}$ and a constant $c>0$ such that
    \begin{equation}\label{eq:ex_ray_LDP}
        v_x = \begin{cases}
            ce^\epsilon & \text{if } x \in \mathcal{Z} \\
            c & \text{if } x \notin \mathcal{Z}.
        \end{cases}
    \end{equation}
\end{lemma}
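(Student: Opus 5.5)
The plan is to apply Proposition~\ref{prop:ex_ray_rank} to the cone $\mathcal{K}_{\mathcal{X},\epsilon}$, which Lemma~\ref{lem:LDP_cone} shows is nontrivial, pointed and polyhedral with defining matrix $T_{\mathcal{X},\epsilon}\oplus I_{\mathcal{X}\times\mathcal{X}}$. Throughout we assume $|\mathcal{X}|\ge 2$; otherwise $\mathcal{B}(\mathcal{X})$ is empty.

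The first step is to show that every nonzero $v\in\mathcal{K}_{\mathcal{X},\epsilon}$ is strictly positive. If $v_{x'}>0$, then $e^\epsilon v_x\ge v_{x'}>0$ for every $x$. Hence no nonnegativity constraint from $I_{\mathcal{X}\times\mathcal{X}}$ is ever active. The only active constraints are rows $(z,z')$ of $T_{\mathcal{X},\epsilon}$ with $e^\epsilon v_z=v_{z'}$. Write $M:=\max_x v_x$ and $m:=\min_x v_x$. Feasibility gives $M\le e^\epsilon m$, and an active constraint $(z,z')$ forces $v_z=m$, $v_{z'}=M$ and $M=e^\epsilon m$.

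For the ``if'' direction, take $v$ as in \eqref{eq:ex_ray_LDP} with $\emptyset\neq\mathcal{Z}\subsetneq\mathcal{X}$. Then $v\in\mathcal{K}_{\mathcal{X},\epsilon}$, and the active constraints are exactly the pairs $(z,z')$ with $z\notin\mathcal{Z}$ and $z'\in\mathcal{Z}$. Suppose $u\in\ker(T^{(v)})$, so $e^\epsilon u_z=u_{z'}$ for all such pairs. Fix one $z'\in\mathcal{Z}$; this forces all $u_z$ with $z\notin\mathcal{Z}$ to equal a common value $a$. Fixing one $z\notin\mathcal{Z}$ similarly forces all $u_{z'}$ with $z'\in\mathcal{Z}$ to equal $e^\epsilon a$. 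Both sets are nonempty, so $u=(a/c)v$ and the kernel is $\mathrm{span}(\{v\})$. Proposition~\ref{prop:ex_ray_rank} then shows $v$ is an extreme direction.

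For the ``only if'' direction, let $v$ be an extreme direction. If no constraint is active, then $\ker(T^{(v)})=\mathbb{R}^{\mathcal{X}}$, which has dimension at least $2$; this contradicts Proposition~\ref{prop:ex_ray_rank}. So some constraint is active, and therefore $M=e^\epsilon m$. Define $\mathcal{Z}:=\{x: v_x=M\}$ and $\mathcal{L}:=\{x:v_x=m\}$. These sets are disjoint and nonempty because $\epsilon>0$. The active set is exactly $\mathcal{L}\times\mathcal{Z}$, so the remaining coordinates $\mathcal{X}\setminus(\mathcal{Z}\cup\mathcal{L})$ are unconstrained in the kernel. If that set were nonempty, the kernel would contain $v$ together with a standard basis vector supported outside $\mathrm{supp}$-structure of $v$'s levels, giving dimension at least $2$ and again a contradiction. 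Hence $\mathcal{Z}\cup\mathcal{L}=\mathcal{X}$, and $v$ takes the form \eqref{eq:ex_ray_LDP} with $c=m>0$ and $\mathcal{Z}$ a nonempty proper subset.

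The step requiring the most care is the kernel-dimension argument in the ``only if'' direction: we must identify the active set exactly as $\mathcal{L}\times\mathcal{Z}$ and rule out intermediate values. Everything else is direct verification.
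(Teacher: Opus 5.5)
Your proposal is correct and follows the paper's proof essentially step for step. Both arguments apply Proposition~\ref{prop:ex_ray_rank}. For the two-level vector, both identify the active set as exactly the pairs $(z,z')$ with $z\notin\mathcal{Z}$, $z'\in\mathcal{Z}$, which gives a one-dimensional kernel. For the converse, both use the existence of an active constraint to force $\max_x v_x=e^\epsilon\min_x v_x$, and then rule out intermediate values by placing $e_{x_0}$ in the kernel. Two of your remarks fill in points the paper leaves implicit: strict positivity means the nonnegativity rows of $I_{\mathcal{X}\times\mathcal{X}}$ are never active, and $|\mathcal{X}|\ge 2$ is needed, since the ``only if'' direction fails for $|\mathcal{X}|=1$.
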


\begin{proof}
    Based on Proposition~\ref{prop:ex_ray_rank}, we will prove the Lemma by showing the equivalence with $\mathrm{ker}\left(T_{\mathcal{X},\epsilon}^{(v)}\right) = \mathrm{span}\left( \{v\} \right)$.
    
    ($\Leftarrow$) 
    By definition, $\mathrm{ker}\left(T_{\mathcal{X},\epsilon}^{(v)}\right) \supseteq \mathrm{span}\left( \{v\} \right)$.
    We will show $\mathrm{ker}\left(T_{\mathcal{X},\epsilon}^{(v)}\right) \subseteq \mathrm{span}\left( \{v\} \right)$.
    Suppose there exists $u \in \mathrm{ker}\left(T_{\mathcal{X},\epsilon}^{(v)}\right)$.
    By \eqref{eq:ex_ray_LDP}, the active indices for $v$ are exactly the pairs $(z, z') \in \mathcal{X} \times \mathcal{X}$ where $z \notin \mathcal{Z}$ and $z' \in \mathcal{Z}$, i.e.,
    \begin{equation}
        T_{\mathcal{X},\epsilon}^{(v)} = \bigoplus_{z \notin \mathcal{Z}, z' \in \mathcal{Z}} T_{(z,z'),*}.
    \end{equation}
    Because $u \in \mathrm{ker}\left(T_{\mathcal{X},\epsilon}^{(v)}\right)$, we have
    \begin{equation}
        \forall z \notin \mathcal{Z}, z' \in \mathcal{Z}, \quad e^\epsilon u_z - u_{z'} = 0.
    \end{equation}
    Fix an element $z_0 \notin \mathcal{Z}$.
    Then for any $z' \in \mathcal{Z}$, we have $u_{z'} = e^\epsilon u_{z_0}$.
    This implies that $u$ is constant on $\mathcal{Z}$.
    Similarly, fix an element $z'_0 \in \mathcal{Z}$.
    Then for any $z \notin \mathcal{Z}$, we have $e^\epsilon u_z = u_{z'_0}$, meaning $u_z = e^{-\epsilon} u_{z'_0}$.
    Thus, $u$ is also constant on $\mathcal{X} \setminus \mathcal{Z}$.
    Let $u_z = k$ for $z \notin \mathcal{Z}$. It follows that $u_{z'} = k e^\epsilon$ for $z' \in \mathcal{Z}$.
    Therefore, $u = (k/c) v$, meaning $u$ is proportional to $v$. 
    Hence, we conclude that $\mathrm{ker}\left(T_{\mathcal{X},\epsilon}^{(v)}\right) = \mathrm{span}\left( \{v\} \right)$.
    
    ($\Rightarrow$)
    Let $a = \min_x v_x$ and $b = \max_x v_x$, and define 
    \begin{equation}
        \mathcal{W} = \{x \in \mathcal{X} : v_x = a\}, \quad \mathcal{Z} = \{x \in \mathcal{X} : v_x = b\}.
    \end{equation}
    It can be checked that $v_x > 0$ for all $x \in \mathcal{X}$, meaning both $\mathcal{W}$ and $\mathcal{Z}$ are non-empty.
    To prove the desired result, we will show $b = e^\epsilon a$ and $\mathcal{W} = \mathcal{X} \setminus \mathcal{Z}$.
    
    Since $v \in \mathcal{K}_{\mathcal{X},\epsilon}$, we must have $b \leq e^\epsilon a$.
    If $b < e^\epsilon a$, then $e^\epsilon v_z - v_{z'} \geq e^\epsilon a - b > 0$ for all $z \neq z' \in \mathcal{X}$.
    Thus, there is no active constraint for $v$, meaning $\mathrm{ker}\left(T_{\mathcal{X},\epsilon}^{(v)}\right) = \mathbb{R}^\mathcal{X}$.
    This contradicts $\mathrm{ker}\left(T_{\mathcal{X},\epsilon}^{(v)}\right) = \mathrm{span}\left( \{v\} \right)$.
    Hence, we must have $b = e^\epsilon a$.
    
    Now, suppose there exists an element $x_0 \notin \mathcal{W} \cup \mathcal{Z}$, meaning $v_{x_0} \in (a, b)$.
    Because $b = e^\epsilon a$, we have 
    \begin{equation}
        (T_{\mathcal{X},\epsilon})_{(z,z'),*} v = 0 \iff (z,z') \in \mathcal{W} \times \mathcal{Z}.
    \end{equation}
    Since $x_0 \notin \mathcal{W} \cup \mathcal{Z}$, the $x_0$-th column of $T_{\mathcal{X},\epsilon}^{(v)}$ is the zero column.
    This implies the standard basis vector $e_{x_0}$ is in $\mathrm{ker}\left(T_{\mathcal{X},\epsilon}^{(v)}\right)$.
    Since $v_x > 0$ for all $x\in\mathcal{X}$, $v$ and $e_{x_0}$ are linearly independent.
    Therefore, $\mathrm{ker}\left(T_{\mathcal{X},\epsilon}^{(v)}\right)$ has dimension at least 2.
    This again contradicts $\mathrm{ker}\left(T_{\mathcal{X},\epsilon}^{(v)}\right) = \mathrm{span}\left( \{v\} \right)$.
    Thus, $\mathcal{X} = \mathcal{W} \cup \mathcal{Z}$, which dictates that $\mathcal{W} = \mathcal{X} \setminus \mathcal{Z}$, completing the proof.
\end{proof}

Finally, combining Proposition~\ref{prop:max_LDP_geo} and Lemma~\ref{lem:ext_ray_LDP} yields Theorem~\ref{thm:extremal_LDP_is_max}.
\begin{proof}[Proof of Theorem~\ref{thm:extremal_LDP_is_max}]
    We first prove surjectivity.
    Let $Q \in \max(\mathcal{Q}_{\mathcal{X},\epsilon})$ with output alphabet $\mathcal{Z}$.
    By Proposition~\ref{prop:max_LDP_geo} and Lemma~\ref{lem:ext_ray_LDP}, every non-zero row of $Q$ is a constant multiple of a row of $S_{\mathcal{X},\epsilon}$, say $Q_{z,*} = q_z(S_{\mathcal{X},\epsilon})_{y_z,*}$ for some scalar $q_z > 0$ and index $y_z \in \mathcal{B}(\mathcal{X})$.
    For each subset $y \in \mathcal{B}(\mathcal{X})$, we gather all rows of $Q$ corresponding to the same extreme ray by defining a combined weight
    \begin{equation}
        c_y := \sum_{z: y_z = y} q_z,
    \end{equation}
    where $c_y = 0$ if no such $z$ exists.
    This constructs a weight vector $c \in \mathbb{R}_+^{\mathcal{B}(\mathcal{X})}$.
    Because $Q$ is column-stochastic, we have
    \begin{equation}
        \sum_{z \in \mathcal{Z}} Q_{z,x} = \sum_{z \in \mathcal{Z}} q_z (S_{\mathcal{X},\epsilon})_{y_z, x} = \sum_{y \in \mathcal{B}(\mathcal{X})} \left( \sum_{z: y_z = y} q_z \right) (S_{\mathcal{X},\epsilon})_{y, x} = \sum_{y \in \mathcal{B}(\mathcal{X})} c_y (S_{\mathcal{X},\epsilon})_{y, x} = 1,
    \end{equation}
    for all $x \in \mathcal{X}$.
    Furthermore, the process of gathering proportional rows and permuting them does not change the equivalence class of the channel by Lemma~\ref{lem:equiv_ch}.
    Therefore, $Q \sim Q_{\mathrm{ex}}^{(c)}$.
    Thus, every equivalence class is successfully reached by some weight vector $c \in \mathcal{S}_{\mathcal{X},\epsilon}$, proving surjectivity.

    For injectivity, suppose $Q_{\mathrm{ex}}^{(c)} \sim Q_{\mathrm{ex}}^{(c')}$, meaning there exist column-stochastic matrices $W$ and $V$ such that $Q_{\mathrm{ex}}^{(c)} = W Q_{\mathrm{ex}}^{(c')}$ and $Q_{\mathrm{ex}}^{(c')} = V Q_{\mathrm{ex}}^{(c)}$.
    By construction, we have $(Q_{\mathrm{ex}}^{(c)})_{y,*} = c_y S_{y,*}$ and $(Q_{\mathrm{ex}}^{(c')})_{y,*} = c'_y S_{y,*}$.
    Because $Q_{\mathrm{ex}}^{(c)} = W Q_{\mathrm{ex}}^{(c')}$ and every row of $S_{\mathcal{X},\epsilon}$ are distinct extreme directions of $\mathcal{K}_{\mathcal{X},\epsilon}$, we get $W_{y,\tilde{y}} c'_{\tilde{y}} = 0$ for all $\tilde{y} \neq y$, and hence $c_y = W_{y,y} c'_y$ for all $y \in \mathcal{B}(\mathcal{X})$.
    By identical logic using the reverse relation $Q_{\mathrm{ex}}^{(c')} = V Q_{\mathrm{ex}}^{(c)}$, we also have $c'_y = V_{y,y} c_y$ for all $y \in \mathcal{B}(\mathcal{X})$. 
    Therefore, we have $c_y = W_{y,y} V_{y,y} c_y$.
    Since $W_{y,y}, V_{y,y} \leq 1$, $W_{y,y} = V_{y,y} = 1$ should hold whenever $c_y > 0$.
    Returning to $c_y = W_{y,y} c'_y$, this immediately gives $c_y = c'_y$ for all $y$ with $c_y > 0$. 
    For any index where $c_y = 0$, the relation $c'_y = V_{y,y} c_y$ ensures $c'_y = 0$ as well.
    Thus, $c = c'$ holds, which completes the proof.
\end{proof}

\subsection{Proof of Theorem~\ref{thm:G_inv_poly}}\label{sec:pf_G_inv_poly}

For any $G$-invariant weight vector $c$ and $x \in \mathfrak{O}$, the column-stochastic constraint in \eqref{eq:extremal_LDP_ch} collapses as
\begin{equation}\label{eq:constraint_equiv}
    1 = \sum_{y \in \mathcal{B}(\mathcal{X})} c_y (S_{\mathcal{X},\epsilon})_{y,x} = \sum_{\mathcal{O} \in \mathcal{B}(\mathcal{X})/G} w_{\mathcal{O}} \left( \sum_{y \in \mathcal{O}} (S_{\mathcal{X},\epsilon})_{y,x} \right) = \sum_{\mathcal{O} \in \mathcal{B}(\mathcal{X})/G} w_{\mathcal{O}} \tilde{r}_{\mathfrak{O}, \mathcal{O}}.
\end{equation}

We now prove injectivity.
For any given $w \in \mathcal{S}^G_{\mathcal{X},\epsilon}$, we define $c(w) \in \mathcal{S}_{\mathcal{X},\epsilon}$ by
\begin{equation}
    (c(w))_y = w_{\mathcal{O}},
\end{equation}
for every $\mathcal{O} \in \mathcal{B}(\mathcal{X})/G$ and $y \in \mathcal{O}$.
Then, it is clear that $Q_{\mathrm{ex},G}^{(w)}$ is equal to $Q_{\mathrm{ex}}^{(c(w))}$ up to a row permutation, yielding $Q_{\mathrm{ex},G}^{(w)} \sim Q_{\mathrm{ex}}^{(c(w))}$ by Lemma~\ref{lem:equiv_ch}.
Because $c: \mathcal{S}^G_{\mathcal{X},\epsilon} \rightarrow \mathcal{S}_{\mathcal{X},\epsilon}$ and $Q^{(\cdot)}_{\mathrm{ex}}$ are injective, $Q_{\mathrm{ex},G}^{(\cdot)} \sim Q^{(c(\cdot))}_{\mathrm{ex}}$ is also injective.

Now, we prove surjectivity.
Let $Q \in \max^G(\mathcal{Q}_{\mathcal{X},\epsilon})$ with output alphabet $\mathcal{Z}$ which is $G$-invariant with respect to a fixed $G$-action $\sigma$ on $\mathcal{Z}$.
By Theorem~\ref{thm:extremal_LDP_is_max}, there is a unique weight vector $c \in \mathcal{S}_{\mathcal{X},\epsilon}$ such that $Q \sim Q_{\mathrm{ex}}^{(c)}$.
First, we prove that the equivalence $Q \sim Q_{\mathrm{ex}}^{(c)}$ implies $g \circ_\sigma Q \sim g \circ Q_{\mathrm{ex}}^{(c)}$, where the $G$-action on the right-hand side is induced from the canonical $G$-action on $\mathcal{B}(\mathcal{X})$.
Because $Q \precsim Q_{\mathrm{ex}}^{(c)}$, there exists a column-stochastic matrix $W$ such that $Q = WQ_{\mathrm{ex}}^{(c)}$.
Then, we have
\begin{align}
    (g \circ_\sigma Q)_{z,x} &= Q_{\sigma_{g^{-1}}(z), g^{-1}x} = \sum_{y \in \mathcal{B}(\mathcal{X})} W_{\sigma_{g^{-1}}(z), y} (Q_{\mathrm{ex}}^{(c)})_{y, g^{-1}x}
    \\& = \sum_{y \in \mathcal{B}(\mathcal{X})} W_{\sigma_{g^{-1}}(z), g^{-1}y} (Q_{\mathrm{ex}}^{(c)})_{g^{-1}y, g^{-1}x}
    \\& = \sum_{y \in \mathcal{B}(\mathcal{X})} (g \circ_\sigma W)_{z, y} (g\circ Q_{\mathrm{ex}}^{(c)})_{y,x},
\end{align}
where $g \circ_\sigma W$ is defined by $(g\circ_\sigma W)_{z,y}:= W_{\sigma_{g^{-1}}(z), g^{-1}y}$.
Since $g \circ_\sigma W$ is also a valid column-stochastic matrix, we have $g \circ_\sigma Q \precsim g \circ Q_{\mathrm{ex}}^{(c)}$.
By identical symmetric logic, we have $g \circ Q_{\mathrm{ex}}^{(c)} \precsim g \circ_\sigma Q$, thereby $g \circ_\sigma Q \sim g \circ Q_{\mathrm{ex}}^{(c)}$.

Next, we induce a $G$-action on $Q_{\mathrm{ex}}^{(c)}$ to that on a weight vector $c$.
By definition, we have
\begin{equation}
    (g \circ Q_{\mathrm{ex}}^{(c)})_{y,x} = (Q_{\mathrm{ex}}^{(c)})_{g^{-1}y, g^{-1}x} = c_{g^{-1}y} (S_{\mathcal{X},\epsilon})_{g^{-1}y, g^{-1}x} = c_{g^{-1}y} (S_{\mathcal{X},\epsilon})_{y, x},
\end{equation}
where the last equality follows from \eqref{eq:staircase_mat} and $x \in y$ if and only if $g^{-1}x \in g^{-1}y$.
Therefore, we get $g \circ Q_{\mathrm{ex}}^{(c)} = Q_{\mathrm{ex}}^{(gc)}$.
Up to now, we have
\begin{equation}
    Q_{\mathrm{ex}}^{(c)} \sim Q = g \circ_\sigma Q \sim g \circ Q_{\mathrm{ex}}^{(c)} = Q_{\mathrm{ex}}^{(gc)}.
\end{equation}
Since $Q_{\mathrm{ex}}^{(c)} \sim Q_{\mathrm{ex}}^{(gc)}$, the injectivity of $Q_{\mathrm{ex}}$ requires $c = gc$ for all $g \in G$, i.e., the weight vector $c$ is $G$-invariant.
We can therefore define an orbit-weight vector $w \in \mathbb{R}_+^{\mathcal{B}(\mathcal{X})/G}$ by setting $w_{\mathcal{O}} := c_y$ for $y \in \mathcal{O}$, which exactly yields $c(w) = c$.
Finally, because $c \in \mathcal{S}_{\mathcal{X},\epsilon}$, the constraint equivalence derived in \eqref{eq:constraint_equiv} guarantees that $w \in \mathcal{S}_{\mathcal{X},\epsilon}^G$, which proves surjectivity and completes the proof. \hfill $\square$

\subsection{Proof of Corollary~\ref{cor:PUT_transitive}}\label{sec:PUT_transitive}
By \eqref{eq:suff_G_inv_max_red}, we can restrict the domain of optimization to $\mathcal{S}_{\mathcal{X},\epsilon}^G$.
Because $G$ acts transitively on $\mathcal{X}$, we have $\mathcal{X}/G = \{\mathcal{X}\}$.
Consequently, the column-stochastic constraint in \eqref{eq:G_inv_poly} collapses into a single linear equality constraint
\begin{equation}\label{eq:transitive_constraint}
    \sum_{\mathcal{O} \in \mathcal{B}(\mathcal{X})/G} w_{\mathcal{O}} \tilde{r}_{\mathcal{X}, \mathcal{O}} = 1.
\end{equation}

We first derive the explicit form of $\tilde{r}_{\mathcal{X}, \mathcal{O}}$.
By a standard double-counting argument, we have
\begin{equation}
   |\mathcal{X}|r_{\mathcal{X},\mathcal{O}} = \sum_{y \in \mathcal{O}}\sum_{x \in \mathcal{X}} \mathbbm{1}(x \in y) = \sum_{y \in \mathcal{O}} |y| = |\mathcal{O}| k_{\mathcal{O}}.
\end{equation}
Combining the above with \eqref{eq:G_inv_r_v}, we get 
\begin{equation}
    \tilde{r}_{\mathcal{X}, \mathcal{O}} = \frac{|\mathcal{O}|}{m} \left( k_{\mathcal{O}} e^\epsilon + m - k_{\mathcal{O}} \right).
\end{equation}

Note that the minimum of a concave function over a polytope is achieved at an extreme point of the polytope \cite[Cor.~32.4]{rockafellarConvexAnalysis1997}.
Moreover, if a polytope is defined by the positivity constraint and $n$ linear equality constraints, then at most $n$ entries of an extreme point can be non-zero \cite[Prop.~4.2]{bertsimasIntroductionLinearOptimization1997}.
Since there is only one linear equality constraint defining $\mathcal{S}_{\mathcal{X},\epsilon}^G$, every extreme point $w$ of it has exactly one non-zero entry on a single orbit $\mathcal{O} \in \mathcal{B}(\mathcal{X})/G$.
Based on the derived expression for $\tilde{r}_{\mathcal{X}, \mathcal{O}}$ and the equality constraint, we have
\begin{equation}
    w_{\mathcal{O}} = \frac{1}{\tilde{r}_{\mathcal{X}, \mathcal{O}}} = \frac{m}{|\mathcal{O}| (k_{\mathcal{O}} e^\epsilon + m - k_{\mathcal{O}})}.
\end{equation}
This completes the proof.

\section{Applications}\label{sec:app}

\subsection{Privacy-Preserving Hypothesis Testing}\label{sec:priv_HT}

We consider $m$-ary hypothesis testing over smoothed point mass distributions.
In this scenario, the set of model parameters coincides with the input alphabet, $\Theta = \mathcal{X} = [m]$.
We model the data-generating distribution as a one-hot profile diluted by uniform background noise:
\begin{equation}
    (P_{X|\vartheta})_{x,\theta} = \frac{1-\gamma}{m} + \gamma \delta_{x, \theta},
\end{equation}
where $\gamma \in (0,1]$, and $\delta_{x,\theta}$ denotes the Kronecker delta.
The decision space $\mathcal{A}$ is also identical to $\mathcal{X}$, and the loss function is the zero-one loss, $l(\theta,a) = 1 - \delta_{\theta, a}$.
Thus, the risk corresponds to the error probability, and minimizing the expected risk is equivalent to maximizing the guessing probability.
In the Bayesian framework, we consider the uniform prior $\lambda = \mathrm{Unif}(\mathcal{X})$.

This scenario satisfies the conditions for a $G$-invariant statistical decision-making problem under the natural action of the symmetric group $G = \mathrm{Sym}(\mathcal{X})$, where the $G$-actions on $\mathcal{X}=\Theta = \mathcal{A}$ are the natural permutation action.
Therefore, Proposition~\ref{prop:BM_structure} and Corollary~\ref{cor:PUT_transitive} imply that a subset selection (SS) mechanism introduced in Remark~\ref{rmk:SS} achieves the optimal PUT.

Now, we compute the optimal PUTs.
Let $Q = w_\mathcal{O} S_{\mathcal{X},\mathcal{O},\epsilon}$ be an SS mechanism.
Then, the Bayesian risk for a given decision rule $P_{A|Y}$ is calculated as
\begin{align}
    R_{\mathrm{B}}(Q,P_{A|Y};\lambda) &= \frac{1}{m} \sum_{\theta,x,a =1}^m \sum_{y \in \mathcal{O}} \left( \frac{1-\gamma}{m} + \gamma \delta_{x, \theta} \right) w_\mathcal{O}(S_{\mathcal{X},\epsilon})_{y,x}(1 - \delta_{\theta, a})  P_{A|Y}(a|y)
    \\&= \frac{1}{m}\sum_{a=1}^m \sum_{y \in \mathcal{O}} P_{A|Y}(a|y) \left[ \sum_{\theta \neq a} \sum_{x=1}^m \left( \frac{1-\gamma}{m} + \gamma \delta_{x, \theta} \right) w_\mathcal{O}(S_{\mathcal{X},\epsilon})_{y,x} \right].
\end{align}
Because $R_B$ is linear in $P_{A|Y}(\cdot|y)$ for all $y \in \mathcal{Y}$, the optimal Bayesian risk is given by
\begin{equation}
    R_{\mathrm{B}}^*(Q;\lambda) = \frac{1}{m} \sum_{y \in \mathcal{O}} \min_{a \in [m]} \left[ \sum_{\theta \neq a} \sum_{x=1}^m \left( \frac{1-\gamma}{m} + \gamma \delta_{x, \theta} \right) w_\mathcal{O}(S_{\mathcal{X},\epsilon})_{y,x} \right].
\end{equation}
Evaluating the innermost sum over $x \in [m]$ yields
\begin{equation}
    \sum_{x=1}^m \left( \frac{1-\gamma}{m} + \gamma \delta_{x, \theta} \right) w_{\mathcal{O}}(S_{\mathcal{X},\epsilon})_{y,x} = w_{\mathcal{O}} \left( \frac{1-\gamma}{m} (k_{\mathcal{O}}e^\epsilon + m - k_{\mathcal{O}}) + \gamma (S_{\mathcal{X},\epsilon})_{y,\theta}\right).
\end{equation}
Substituting this back, the sum over $\theta \neq a$ can be computed as
\begin{align}
     & w_{\mathcal{O}} \sum_{\theta \neq a}\left( \frac{1-\gamma}{m} (k_{\mathcal{O}}e^\epsilon + m - k_{\mathcal{O}}) + \gamma (S_{\mathcal{X},\epsilon})_{y,\theta}\right)
    \\&= w_{\mathcal{O}} \left[ \left(\frac{(1-\gamma)(m-1)}{m}  + \gamma \right) (k_{\mathcal{O}}e^\epsilon + m - k_{\mathcal{O}}) - \gamma (S_{\mathcal{X},\epsilon})_{y,a} \right].
\end{align}
Because the above is minimized over $a \in [m]$ whenever $a \in y$, the optimal Bayes estimator $P_{A|Y}^*$ can be chosen as
\begin{equation}
    P_{A|Y}^*(a|y) = \frac{1}{k_{\mathcal{O}}} \mathbbm{1}(a \in y).
\end{equation}
Therefore,
\begin{align}
    R_{\mathrm{B}}^*(Q;\lambda) &= \frac{1}{m} \sum_{y \in \mathcal{O}}  w_{\mathcal{O}} \left[ \left(\frac{(1-\gamma)(m-1)}{m}  + \gamma \right) (k_{\mathcal{O}}e^\epsilon + m - k_{\mathcal{O}}) - \gamma e^\epsilon \right]
    \\& = \frac{1}{k_{\mathcal{O}}e^\epsilon + m - k_{\mathcal{O}}}\left[ \left(\frac{(1-\gamma)(m-1)}{m}  + \gamma \right) (k_{\mathcal{O}}e^\epsilon + m - k_{\mathcal{O}}) - \gamma e^\epsilon \right]
    \\& = 1 - \frac{1-\gamma}{m} - \frac{\gamma e^\epsilon}{k_{\mathcal{O}}e^\epsilon + m - k_{\mathcal{O}}}.
\end{align}
Because $R_{\mathrm{B}}^*$ is increasing in $k_{\mathcal{O}} \in [m-1]$, $k_{\mathcal{O}}=1$ minimizes the risk, yielding 
\begin{equation}
    \mathrm{PUT}_{\mathrm{B}}(\mathcal{X},\epsilon;\lambda) = 1 - \frac{1-\gamma}{m} - \frac{\gamma e^\epsilon}{e^\epsilon + m - 1}.
\end{equation}

For the minimax framework, we show that $\mathrm{PUT}_{\mathrm{M}}(\mathcal{X},\epsilon)$ is equal to the Bayesian PUT derived above based on the equalizer rule \cite[Cor.~5.1.5]{TheoryPointEstimation1998}, \cite[Sec.~5.3.2]{bergerStatisticalDecisionTheory1985}.
For any given $Q \in \mathcal{Q}_{\mathcal{X},\epsilon}$, the equalizer rule states that if a decision rule $P_{A|Y}$ is the optimal Bayes decision rule for a given prior $\lambda \in \mathcal{P}(\Theta)$ such that $R(\theta,Q,P_{A|Y})$ is constant over $\theta \in \Theta$, then $P_{A|Y}$ is also the optimal minimax decision rule.
For the optimal Bayes decision rule $P_{A|Y}^*$ for a uniform distribution $\lambda = \mathrm{Unif}(\mathcal{X})$ described in the above, we have 
\begin{align}
    R(\theta,Q,P^*_{A|Y}) &= w_{\mathcal{O}}  \sum_{y \in \mathcal{O}} \left( 1 - \frac{1}{k_{\mathcal{O}}} \mathbbm{1}(\theta \in y) \right) \sum_{x=1}^m \left( \frac{1-\gamma}{m} + \gamma \delta_{x, \theta} \right)  (S_{\mathcal{X},\epsilon})_{y,x}
    \\& = w_{\mathcal{O}}  \sum_{y \in \mathcal{O}} \left( 1 - \frac{1}{k_{\mathcal{O}}} \mathbbm{1}(\theta \in y) \right) \left( \frac{1-\gamma}{m} (k_{\mathcal{O}}e^\epsilon + m - k_{\mathcal{O}}) + \gamma (S_{\mathcal{X},\epsilon})_{y,\theta}\right)
    \\&= w_{\mathcal{O}} \Bigg[ \binom{m-1}{k_{\mathcal{O}}-1} \left( 1 - \frac{1}{k_{\mathcal{O}}} \right) \left( \frac{1-\gamma}{m} (k_{\mathcal{O}}e^\epsilon + m - k_{\mathcal{O}}) + \gamma e^\epsilon \right) \\
    &\quad \quad \quad + \binom{m-1}{k_{\mathcal{O}}}\left( \frac{1-\gamma}{m} (k_{\mathcal{O}}e^\epsilon + m - k_{\mathcal{O}}) + \gamma \right) \Bigg] \nonumber.
\end{align}
Since $R(\theta,Q,P^*_{A|Y})$ does not depend on $\theta$, $R_{\mathrm{M}}^*(Q) = R_{\mathrm{B}}^*(Q;\lambda)$ by the equalizer rule.
Therefore, $\mathrm{PUT}_{\mathrm{M}}(\mathcal{X},\epsilon) = \mathrm{PUT}_{\mathrm{B}}(\mathcal{X},\epsilon;\lambda)$.

\subsection{Privacy-Preserving Parametric Estimation}\label{sec:priv_PE}

We consider a parametric estimation problem where the data-generating model follows a discrete cardioid distribution
\begin{equation}
    (P_{X|\vartheta})_{x,\theta} = \frac{1}{m} \left( 1 + \gamma \cos\left( \frac{2\pi x}{m} - \theta \right) \right),
\end{equation}
where $\mathcal{X} = \{0,1,\dots,m-1\}$, $m \geq 3$, $\Theta = \mathcal{A} = [0,2\pi)$, and $\gamma \in (0, 1]$.
The loss function is defined as the cosine distance
\begin{equation}
    l(\theta,a) = 1 - \cos(\theta - a).
\end{equation}
In the Bayesian framework, we consider a uniform prior $\lambda$ on $\Theta$.

This problem forms a $G$-invariant statistical decision-making problem under the action of the cyclic group $G = \mathbb{Z}_m$, where the $G$-action on $\mathcal{X}$ is the natural cyclic shift $gx = (g + x) \pmod{m}$, and on $\Theta = \mathcal{A}$ is a rotation by discrete steps $g\theta = \left(\theta + \frac{2\pi g}{m}\right) \pmod{2\pi}$.
Because $\mathbb{Z}_m$ acts transitively on $\mathcal{X}$, Proposition~\ref{prop:BM_structure} and Corollary~\ref{cor:PUT_transitive} imply that an $\epsilon$-LDP channel of the form $Q = w_\mathcal{O} S_{\mathcal{X},\mathcal{O},\epsilon}$ achieves the optimal PUT.

Now, we compute the optimal PUTs.
Let $Q = w_\mathcal{O} S_{\mathcal{X},\mathcal{O},\epsilon}$.
Then, the Bayesian risk for a given decision rule $P_{A|Y}$ is calculated as
\begin{align}
    &R_{\mathrm{B}}(Q,P_{A|Y};\lambda) \nonumber
    \\& = \frac{1}{2\pi}\int_0^{2\pi} \int_0^{2\pi} \sum_{x=0}^{m-1} \sum_{y \in \mathcal{O}} P_{X|\vartheta}(x|\theta) w_\mathcal{O}(S_{\mathcal{X},\epsilon})_{y,x}l(\theta,a)  P_{A|Y}(da|y) d\theta \\
    &= \sum_{y \in \mathcal{O}} w_\mathcal{O} \int_0^{2\pi} P_{A|Y}(da|y) \left[ \frac{1}{m} \sum_{x=0}^{m-1} (S_{\mathcal{X},\epsilon})_{y,x} \int_0^{2\pi} P_{X|\vartheta}(x|\theta) l(\theta,a) \frac{d\theta}{2\pi} \right].
\end{align}

The inner integral over $\theta \in [0,2\pi)$ can be evaluated as
\begin{align}
    \int_0^{2\pi} P_{X|\vartheta}(x|\theta) l(\theta,a) \frac{d\theta}{2\pi}  &=\int_0^{2\pi} \left( 1 + \gamma \cos\left( \frac{2\pi x}{m} - \theta \right) \right)(1 - \cos(\theta - a)) \frac{d\theta}{2\pi}
    \\& = 1 - \frac{\gamma}{2} \cos\left( \frac{2\pi x}{m} - a \right).
\end{align}
Substituting this back, the risk becomes
\begin{align}
    &R_{\mathrm{B}}(Q,P_{A|Y};\lambda) \nonumber
    \\&= \sum_{y \in \mathcal{O}} w_\mathcal{O} \int_0^{2\pi} P_{A|Y}(da|y) \left[ \frac{1}{m} \sum_{x=0}^{m-1} (S_{\mathcal{X},\epsilon})_{y,x} \left( 1 - \frac{\gamma}{2} \cos\left( \frac{2\pi x}{m} - a \right) \right) \right]
    \\& = \sum_{y \in \mathcal{O}} w_\mathcal{O} \int_0^{2\pi} P_{A|Y}(da|y) \left[ \frac{k_{\mathcal{O}}e^\epsilon + m - k_{\mathcal{O}}}{m} - \frac{\gamma}{2m} \sum_{x=0}^{m-1} (S_{\mathcal{X},\epsilon})_{y,x} \cos\left( \frac{2\pi x}{m} - a \right) \right]
    \\& = \frac{1}{|\mathcal{O}|} \sum_{y \in \mathcal{O}} \int_0^{2\pi} P_{A|Y}(da|y) \left[ 1 - \frac{\gamma (e^\epsilon - 1)}{2(k_{\mathcal{O}}e^\epsilon + m - k_{\mathcal{O}})} \sum_{x \in y} \cos\left( \frac{2\pi x}{m} - a \right) \right],
\end{align}
where the last equality follows from the fact that $\sum_{x=0}^{m-1} \cos(\frac{2\pi x}{m} - a) = 0$.

Because the above risk is linear in $P_{A|Y}(\cdot|y)$ for all $y \in \mathcal{Y}$, the optimal Bayesian risk is given by 
\begin{equation}
    R^*_{\mathrm{B}}(Q;\lambda) = \frac{1}{|\mathcal{O}|} \sum_{y \in \mathcal{O}} \min_{a \in [0,2\pi)} \left[ 1 - \frac{\gamma(e^\epsilon - 1)}{2(k_{\mathcal{O}}e^\epsilon + m - k_{\mathcal{O}})} \sum_{x \in y} \cos\left( \frac{2\pi x}{m} - a \right) \right].
\end{equation}

The internal minimization over $a \in [0,2\pi)$ is equivalent to
\begin{equation}\label{eq:pf_PE_max_a}
    \max_{a \in [0,2\pi)} \sum_{x \in y} \cos\left( \frac{2\pi x}{m} - a \right) = \max_{a \in [0,2\pi)} \mathrm{Re}\left(e^{-ia} \sum_{x \in y} e^{2\pi i x / m}\right) = \left\| Z_y \right\|,
\end{equation}
where $Z_y := \sum_{x \in y} e^{2\pi i x / m}$ and the last equality is achieved by $a = a^*(y) := \mathrm{arg}(Z_y)$ when $\|Z_y\| > 0$, and an arbitrary $a \in [0,2\pi)$ achieves the maximum value $0$ when $\|Z_y\| = 0$.
Therefore, we can specify the optimal Bayesian decision rule $P^*_{A|Y}$ by a deterministic decision rule $y \mapsto a^*(y)$ when $\|Z_y\| > 0$, and a uniform random decision rule over $[0,2\pi)$ when $\|Z_y\| = 0$.

Returning to the optimal Bayesian risk, we get
\begin{align}
    R_{\mathrm{B}}^*(Q;\lambda) &=  1- \sum_{y \in \mathcal{O}} \frac{\gamma(e^\epsilon - 1)}{2|\mathcal{O}|(k_{\mathcal{O}}e^\epsilon + m - k_{\mathcal{O}})} \left\|Z_y\right\|
    \\& = 1 - \frac{\gamma(e^\epsilon - 1)}{2(k_{\mathcal{O}}e^\epsilon + m - k_{\mathcal{O}})} \left\|Z_y\right\|,
\end{align}
where the last equality follows from the fact that the subsets $y \in \mathcal{O}$ are generated by the cyclic action of $\mathbb{Z}_m$, meaning $\left\| Z_y \right\|$ remains constant across all $y \in \mathcal{O}$.
This yields
\begin{equation}
    \mathrm{PUT}_{\mathrm{B}}(\mathcal{X},\epsilon;\lambda) = 1 - \max_{\mathcal{O} \in \mathcal{B}(\mathcal{X})/G} \frac{\gamma(e^\epsilon - 1)}{2(k_{\mathcal{O}}e^\epsilon + m - k_{\mathcal{O}})} \left\|Z_y\right\|,
\end{equation}
where $y \in \mathcal{O}$.

To solve the maximization over $\mathcal{O} \in \mathcal{B}(\mathcal{X})/G$, we first maximize over the orbits with a uniform subset size $k_{\mathcal{O}} = k$, and then maximize over $k \in [m-1]$.
For a fixed subset size $k_{\mathcal{O}} = k$, $\left\|Z_y\right\|$ is maximized when $y$ is a subset consisting of consecutive elements (for example, $y = \{0,1,\ldots,k-1\}$).
To see this, note that $\|Z_y\| = \max_{\theta \in [0, 2\pi)} \mathrm{Re}\left( e^{-i\theta} Z_y \right) = \max_{\theta \in [0, 2\pi)} \sum_{x \in y} \cos\left(\frac{2\pi x}{m} - \theta\right)$, and for any fixed angle $\theta$, the sum is maximized by selecting the $k$ distinct elements $x \in \{0, 1, \dots, m-1\}$ that yield the largest values of $\cos\left(\frac{2\pi x}{m} - \theta\right)$.
Hence, $y$ that maximizes $\|Z_y \|$ must consist of the $k$ values of $x$ whose corresponding angles $\frac{2\pi x}{m}$ are closest to $\theta$ on the unit circle, naturally forming a contiguous sequence.
Thus, we have
\begin{equation}
    \max_{\mathcal{O} \in \mathcal{B}(\mathcal{X})/G : k_{\mathcal{O}} = k} \left\|Z_y\right\| = \left\|\sum_{x = 0}^{k-1} e^{2\pi i x / m}\right\| = \frac{\sin(\frac{\pi k}{m})} {\sin(\frac{\pi}{m})}.
\end{equation}
Therefore,
\begin{equation}
    \mathrm{PUT}_{\mathrm{B}}(\mathcal{X},\epsilon;\lambda) = 1 - \frac{\gamma(e^\epsilon - 1)}{2\sin(\frac{\pi}{m})} \cdot \max_{k \in [m-1]}\frac{\sin(\frac{\pi k}{m})}{ke^\epsilon + m - k}.
\end{equation}

For the minimax framework, we show that $\mathrm{PUT}_{\mathrm{M}}(\mathcal{X},\epsilon)$ is equal to the Bayesian $\mathrm{PUT}_{\mathrm{B}}(\mathcal{X},\epsilon;\lambda)$ by applying the equalizer rule \cite[Cor.~5.1.5]{TheoryPointEstimation1998}, \cite[Sec.~5.3.2]{bergerStatisticalDecisionTheory1985}.
In deriving the optimal Bayesian risk for a given $Q = w_{\mathcal{O}} S_{\mathcal{X},\mathcal{O},\epsilon}$, it was shown that the optimal Bayes decision rule $P^*_{A|Y}$ for the uniform prior is the deterministic rule $y \mapsto a^*(y)$ when $\|Z_y\| > 0$, and the uniform random decision rule when $\|Z_y\| = 0$.
Because the subsets $y \in \mathcal{O}$ are generated by the cyclic action of $\mathbb{Z}_m$, the magnitude $\|Z_y\|$ is constant over $y \in \mathcal{O}$. 
Therefore, we evaluate the risk $R(\theta,Q,P^*_{A|Y})$ by considering two disjoint cases for the entire orbit: $\|Z_y\| = 0$ and $\|Z_y\| > 0$.

First, assume $\|Z_y\| = 0$.
In this case, the optimal Bayes estimator is the uniform randomized rule $P^*_{A|Y}(da|y) = \frac{da}{2\pi}$, and we have
\begin{align}
    &R(\theta,Q,P^*_{A|Y}) = \frac{1}{2\pi}\int_0^{2 \pi} \sum_{x=0}^{m-1} \sum_{y \in \mathcal{O}} P_{X|\vartheta}(x|\theta) w_\mathcal{O}(S_{\mathcal{X},\epsilon})_{y,x}l(\theta,a) da
    \\& = \left[\frac{1}{2\pi}\int_0^{2\pi} \left( 1 - \cos(\theta - a) \right) da \right]\sum_{y \in \mathcal{O}}\sum_{x=0}^{m-1} \frac{{w_\mathcal{O}} }{m}\left( 1 + \gamma \cos\left( \frac{2\pi x}{m} - \theta \right) \right) (S_{\mathcal{X},\epsilon})_{y,x} 
    \\& = 1 + \frac{\gamma(e^\epsilon - 1)}{|\mathcal{O}|(k_{\mathcal{O}}e^\epsilon + m - k_{\mathcal{O}})} \sum\limits_{y \in \mathcal{O}}\sum_{x \in y} \cos\left( \frac{2\pi x}{m} - \theta \right)
    \\& = 1 + \frac{\gamma(e^\epsilon - 1)}{|\mathcal{O}|(k_{\mathcal{O}}e^\epsilon + m - k_{\mathcal{O}})} \sum\limits_{y \in \mathcal{O}}\mathrm{Re}(e^{-i \theta} Z_y) = 1.
\end{align}

Now, assume $\|Z_y\| > 0$.
In this case, the optimal Bayes estimator is the deterministic decision rule corresponding to $a^*(y) = \arg (Z_y)$, and we have
\begin{align}
    &R(\theta,Q,P^*_{A|Y}) = \sum_{x=0}^{m-1} \sum_{y \in \mathcal{O}} P_{X|\vartheta}(x|\theta) w_\mathcal{O}(S_{\mathcal{X},\epsilon})_{y,x}l(\theta,a^*(y))
    \\& = {w_\mathcal{O}} \sum_{y \in \mathcal{O}} \left( 1 - \cos(\theta - a^*(y)) \right) \sum_{x=0}^{m-1} \frac{1}{m}\left( 1 + \gamma \cos\left( \frac{2\pi x}{m} - \theta \right) \right) (S_{\mathcal{X},\epsilon})_{y,x}
    \\& = \frac{1}{|\mathcal{O}|}\sum_{y \in \mathcal{O}} \left( 1 - \cos(\theta - a^*(y)) \right) \left( 1 + \frac{\gamma(e^\epsilon - 1)}{k_{\mathcal{O}}e^\epsilon + m - k_{\mathcal{O}}} \sum_{x \in y} \cos\left( \frac{2\pi x}{m} - \theta \right)\right)
    \\& = \frac{1}{|\mathcal{O}|}\sum_{y \in \mathcal{O}} \left( 1 - \cos(\theta - a^*(y)) \right) \left( 1 + \frac{\|Z_y\|\gamma(e^\epsilon - 1)}{k_{\mathcal{O}}e^\epsilon + m - k_{\mathcal{O}}} \cos(\theta - a^*(y))\right),
\end{align}
where the last equality follows from 
\begin{equation}
     \sum_{x \in y} \cos\left( \frac{2\pi x}{m} - \theta \right) = \mathrm{Re} (e^{-i\theta} Z_y) = \|Z_y \| \cos( a^*(y) - \theta).
\end{equation}
Therefore, the conditional risk $R(\theta,Q,P^*_{A|Y})$ is a quadratic polynomial in $\cos( \theta - a^*(y))$:
\begin{align}
    R(\theta,Q,P^*_{A|Y}) &= 1 + \frac{L_{\mathcal{O}} - 1}{|\mathcal{O}|} \sum_{y \in \mathcal{O}}  \cos(a^*(y) - \theta) - \frac{L_{\mathcal{O}}}{|\mathcal{O}|} \sum\limits_{y \in \mathcal{O}}\cos^2(a^*(y) - \theta),
\end{align}
where we define the constant
\begin{equation}
    L_{\mathcal{O}} = \frac{\gamma (e^\epsilon - 1) \|Z_y\|}{k_{\mathcal{O}}e^\epsilon + m - k_{\mathcal{O}}}.
\end{equation}

Because $\mathcal{O}$ is generated by $\mathbb{Z}_m$, $\|Z_y\|$ and $L_{\mathcal{O}}$ are strictly constant over $y \in \mathcal{O}$.
Furthermore, if $|\mathcal{O}| \leq 2$, the only possible non-empty subsets $y \in \mathcal{O}$ are $\mathcal{X}$, the set of all even elements in $\mathcal{X}$, or the set of all odd elements in $\mathcal{X}$.
For such cases, the symmetric configuration forces $\|Z_y \| = 0$.
Thus, the condition $\|Z_y\| > 0$ ensures that $|\mathcal{O}| \geq 3$.
Additionally, we have $a^*(g y) = a^*(y) + \frac{2\pi g}{m} \pmod{2\pi}$, meaning the angles $\{a^*(y)\}_{y \in \mathcal{O}}$ uniformly partition the unit circle.
Therefore, we have the following identities:
\begin{equation}
    \sum_{y \in \mathcal{O}} \cos(a^*(y) - \theta) = 0, \quad \sum_{y \in \mathcal{O}} \cos^2(a^*(y) - \theta) = \frac{|\mathcal{O}|}{2} + \frac{1}{2}\sum_{y \in \mathcal{O}} \cos(2a^*(y) - 2\theta)  = |\mathcal{O}|/2.
\end{equation}
Substituting these identities back into the risk, we obtain
\begin{equation}
    R(\theta,Q,P^*_{A|Y}) = 1 - \frac{L_{\mathcal{O}}}{2}.
\end{equation}

Since $L_{\mathcal{O}} = 0$ when $\|Z_y\| = 0$, both disjoint cases $\|Z_y\| = 0$ and $\|Z_y\| \neq 0$ coincide with the above expression, which is independent of $\theta$.
Therefore, the equalizer rule implies that $R^*_\mathrm{M}(Q) = R^*_\mathrm{B}(Q;\lambda)$ for all channels in the form of $Q = w_{\mathcal{O}} S_{\mathcal{X},\mathcal{O},\epsilon}$.
Hence, we conclude that $\mathrm{PUT}_\mathrm{M}(\mathcal{X},\epsilon) = \mathrm{PUT}_\mathrm{B}(\mathcal{X},\epsilon;\lambda)$.

\section{Broader Impacts}
This work develops a theoretical framework for analyzing the optimal PUT in privacy-preserving statistical tasks.
These theoretical foundations can be directly applied to improve practical systems that handle sensitive information, such as healthcare records, financial data, and personal recommendation engines.
From a societal perspective, deploying the optimal LDP channels characterized in our study allows data collectors to achieve target inference accuracies with fewer data samples compared to sub-optimal channels, thereby fundamentally reducing the burden of mass data collection.

However, several considerations must be carefully addressed before deploying these theoretical results in practice. While LDP provides rigorous and quantifiable privacy guarantees, it does not ensure perfect anonymity; it protects data in a statistical sense rather than guaranteeing absolute independence between the raw and privatized data. Furthermore, repeated data collection or the aggregation of multiple privatized outputs over time may introduce compounded privacy risks if the overall privacy budget is not carefully managed.


\end{document}